\newtheorem{thm}{Theorem}[section]
\theoremstyle{definition}
\newtheorem{dfn}{Definition}[section]
\theoremstyle{remark}
\theoremstyle{plain}
\newtheorem{lem}[thm]{Lemma}
\newtheorem{col}[thm]{Corollary}
\newtheorem{fig}[figure]{Fig.}
\def\denseformat{
\setlength{\textheight}{9.2in}
\setlength{\textwidth}{6.9in}
\setlength{\evensidemargin}{-0.2in}
\setlength{\oddsidemargin}{-0.2in}
\setlength{\headsep}{10pt}
\setlength{\topmargin}{-0.3in}
\setlength{\columnsep}{0.375in}
\setlength{\itemsep}{0pt}
}
\def\MathN{\hbox{\rm I\kern-2pt I\kern-3.1pt N}}
\begin{document}
\title{Arboricity}

\title{Deterministic Distributed Vertex Coloring in Polylogarithmic Time
}
\author{
Leonid Barenboim\thanks{
        Department of Computer Science,
        Ben-Gurion University of the Negev,
        POB 653, Beer-Sheva 84105, Israel.
        E-mail: {\tt \{leonidba,elkinm\}@cs.bgu.ac.il}
        \newline This research has been supported by the
Israeli Academy of Science, grant 483/06, and by the Binational Science Foundation, grant No. 2008390.
        }
         \and
Michael Elkin$^*$ }
\begin{titlepage}
\def\thepage{}
\maketitle
\begin{abstract}
Consider an $n$-vertex graph $G = (V,E)$ of maximum degree $\Delta$, and suppose that each vertex $v
\in V$ hosts a processor. The processors are allowed to communicate only with their neighbors in
$G$. The communication is synchronous, i.e., it proceeds in discrete rounds.

In the distributed {\em vertex coloring} problem the objective is to color $G$ with $\Delta + 1$, or
slightly more than $\Delta + 1$, colors using as few rounds of communication as possible.
(The number of rounds of communication will be henceforth referred to as {\em running time}.)
Efficient {\em randomized} algorithms for this problem are known for more than twenty years
\cite{L86, ABI86}. Specifically, these algorithms produce a $(\Delta + 1)$-coloring within $O(\log
n)$ time, with high probability. On the other hand, the best known {\em deterministic} algorithm that requires polylogarithmic time employs $O(\Delta^2)$ colors. This algorithm was devised in a seminal FOCS'87 paper by Linial \cite{L87}. Its running time is $O(\log^* n)$. In the same paper Linial asked whether one can color with significantly less than $\Delta^2$ colors in deterministic polylogarithmic time. By now this question of Linial became one of the most central long-standing open questions in this area.

In this paper we answer this question in the affirmative, and devise a deterministic algorithm that employs
$\Delta^{1 +o(1)}$ colors, and runs in {\em polylogarithmic} time. Specifically, the running time of our algorithm is $O(f(\Delta) \log
\Delta \log n)$, for an arbitrarily slow-growing function $f(\Delta) = \omega(1)$.
We can also produce $O(\Delta^{1 + \eta})$-coloring in $O(\log \Delta \log n)$-time, for an arbitrarily small constant $\eta > 0$, and $O(\Delta)$-coloring in $O(\Delta^{\epsilon} \log n)$ time, for an arbitrarily small constant $\epsilon > 0$. Our results are, in fact, far more general than
this. In particular, for a graph of arboricity $a$, our algorithm produces an $O(a^{1 + \eta})$-coloring, for an arbitrarily small constant $\eta > 0$, in time $O(\log a \log n)$. \\ \\ \\ \\ \\ \\ \\ \\ \\

\end{abstract}

\end{titlepage}

\section{Introduction}
{\bf 1.1 Background and Previous Research}

\noindent In the message passing model of distributed computing the network is modeled by an $n$-vertex undirected unweighted graph $G = (V,E)$, with each vertex hosting its own processor with a unique identity number. These numbers are assumed to belong to the range $\{1,2,...,n\}$. Initially, each vertex $v$ knows only its identity number $id(v)$. The vertices communicate over the edges of $E$ in the {\em synchronous manner}. Specifically, computations (or equivalently, algorithms) proceed in discrete rounds. In each round each vertex $v$ is allowed to send a message to each of its neighbors. All messages that are sent in a certain round arrive to their destinations before the next round starts. The number of rounds that elapse from the beginning of the algorithm until its end is called the {\em running time} of the algorithm.

In the {\em vertex coloring} problem one wants to color the vertices of $V$ in such a way that no edge becomes monochromatic. It is very easy to color a graph $G$ of maximum degree $\Delta = \Delta(G)$ in $\Delta + 1$ colors using $n$ rounds. Coloring it in $\Delta + 1$, or slightly more than $\Delta + 1$, colors far more efficiently is one of the most central and fundamental problems in distributed computing. In addition to its theoretical appeal, the problem is also very well motivated by various real-life network tasks \cite{HT04,P00}. 

The vertex coloring problem is also closely related to the {\em maximal independent set} (henceforth, MIS) problem. A subset $U \subseteq V$ is an {\em independent set} if there is no edge $(u,u') \in E$ with both endpoints in $u$. It is an MIS if for every vertex $v \in V \setminus U$, the set $U \bigcup \{v\}$ is not an independent set. A classical reduction of Linial \cite {L92} shows that given a (distributed) algorithm for computing an MIS on general graphs, one can obtain a $(\Delta + 1)$-coloring within the same time.

The (distributed) vertex coloring and MIS problems have been intensively studied since the mid-eighties. Already in 1986 Luby \cite{L86} and Alon, Babai and Itai \cite{ABI86} devised randomized algorithms for the MIS problem that require $O(\log n)$ time. Using Linial's reduction \cite{L92} these results imply that $(\Delta + 1)$-coloring can also be computed in randomized logarithmic time. More recently, Kothapalli et al. \cite{KSOS06} devised a randomized $O(\Delta)$-coloring algorithm that requires $O(\sqrt{\log n })$ time. 
On the other hand, the best known {\em deterministic} algorithm that requires polylogarithmic time employs $O(\Delta^2)$ colors. Specifically, its running time is $O(\log^* n)$\footnote[1]{$\log^* n$ is the smallest integer $t$ such that the $t$-iterated  logarithm of $n$ is no greater than 2, i.e., $\log^{(t)} n \leq 2$.}. This algorithm was devised in a seminal FOCS'87 paper of Linial \cite{L92}. In the end of this paper Linial argued that his method cannot be used to reduce the number of colors below  ${\Delta + 2}\choose{2}$, and asked whether this can be achieved by other means. Specifically, he wrote

\textit{"Proposition 3.4 of [EFF] shows that set systems of the type that would allow further reduction of the number of colors do not exist. Other algorithms may still be capable of coloring with fewer colors.  \textbf{It would be interesting to decide whether this quadratic bound can be improved when time bounds rise from $O(\log^* n)$ to polylog, for instance.}"}

By now, almost quarter a century later, this open question of Linial became one of the most central long-standing open questions in this area.

\noindent {\bf 1.2 Our Results}

\noindent In this paper we answer this question in the affirmative. Specifically, for an arbitrarily small constant $\eta > 0$, our algorithm constructs an $O(\Delta^{1 + \eta})$-coloring in $O(\log \Delta \log n)$ time. Moreover, we show that one can trade time for the number of colors, and devise a $\Delta^{1 + o(1)}$-coloring algorithm with running time $O(f(\Delta) \log \Delta \log n)$, where $f(\Delta) = \omega(1)$ is an arbitrarily  slowly-growing function of $\Delta$. Also, our algorithm can produce an $O(\Delta)$-coloring in $O(\Delta^{\epsilon} \log n)$ time, for an arbitrarily small constant $\epsilon > 0$.

Currently, the state-of-the-art bound for deterministic $O(\Delta^{1 + \eta})$-coloring (respectively, $\Delta^{1 + o(1)}$-coloring; resp., $O(\Delta)$-coloring) is $\min\{O(\Delta^{1-\eta} + \log^* n), 2^{O(\sqrt{\log n})}\}$ (respectively, $\min\{\Delta^{1-o(1)} + O(\log^* n), 2^{O(\sqrt{\log n})}\}$; resp., $\min\{O(\Delta + \log^* n), 2^{O(\sqrt{\log n})}\}$). (Algorithms that produce $O(\Delta \cdot t)$-coloring in $O(\Delta/t + \log^* n)$ time were devised in \cite{BE09} and in \cite{K09}. The algorithm of \cite{PS95} requires $2^{O(\sqrt{\log n})}$ time.) Our results constitute an {\em exponential} improvement over this state-of-the-art for large values of $\Delta$ (i.e., $\Delta = 2^{\Omega(\log^{\epsilon} n)}$, for some constant $\epsilon > 0$), and a significant improvement in the wide range of $\Delta = \log^{1 + \Omega(1)} n$.

In addition, our results are, in fact, far more general than described above. Specifically, we consider graphs with bounded {\em arboricity} \footnote[1]{ The {\em arboricity} of a graph $G = (V,E)$ is the minimal number $a$ such that the edge set $E$ of $G$ can be covered with at most $a$ edge disjoint forests.}
rather than bounded degree. This is a much wider family of graphs that contains, in addition to graphs of bounded degree, the graphs of bounded genus, bounded tree-width, graphs that exclude a fixed minor, and many other graphs. All the results that we have stated above apply to graphs of arboricity at most $a$. (One just needs to replace $\Delta$ by $a$ in the statements of all results. We remark that a graph with maximum degree $\Delta$ has arboricity at most $\Delta$ as well.) 
One interesting consequence of this extension is that if the arboricity $a$ and 
the degree $\Delta$
of a graph  are polynomially separated one from another (i.e., if there 
exists a constant $\nu > 0$
such that $a \le \Delta^{1 - \nu}$) then our algorithm constructs a $(\Delta 
+ 1)$-coloring
\footnote[2]{Actually, even $o(\Delta)$-coloring.} in $O(\log a \cdot \log n) = 
O(\log \Delta \cdot \log n)$
time.

We also show that one can decrease the running time further almost all the way to $\log n$, while still having less than $a^2$ colors. Specifically, we show that in $O(\log \log a \log n)$ time one can construct an $O(a^2 / \log^C a)$-coloring, for an arbitrarily large constant $C$. More generally, for any function $\omega(1) = f(a) = o(\log a) $, one can construct an $O(a^2 / 2^{f(a)})$-coloring in $O(f(a) \log n)$ time.

Our algorithms for coloring graphs of arboricity $a$ also compare very favorably with the current state-of-the-art. Specifically, the fastest algorithm known today  for $O(a)$-coloring \cite{BE08} requires $O(a \log n)$ time. Our algorithm produces $O(a)$-coloring in $O(a^{\epsilon} \log n)$ time, for arbitrarily small constant $\epsilon > 0$. The best known tradeoff between the number of colors and the running time (also due to \cite{BE08}) is $O(a \cdot t)$-coloring in $O(\frac{a}{t} \log n + \log n + a)$ time. We improve this tradeoff and show that $O(a \cdot t)$-coloring can be computed in just $O((\frac{a}{t})^{\epsilon} \log n)$ time, for an arbitrarily small constant $\epsilon > 0$. In some points on the tradeoff curve the improvement is even greater than that. For example, we compute an $O(a^{1 + \eta})$-coloring, for an arbitrarily small $\eta > 0$, in $O(\log a \log n)$ time, while the previous bound was $O(a^{1 - \eta} \log n + a)$ time. Similarly, our $a^{1 + o(1)}$-coloring algorithm requires $O(f(a) \log a \log n)$ time, for any function $f(a) = \omega(1)$, while the previous bound was $a^{1 - o(1)} \cdot \log n + O(a)$ time.

Finally, our results imply improved bounds for the deterministic MIS problem on graphs of bounded arboricity $a$. Specifically, our algorithm produces an MIS in time $O(a + a^{\epsilon} \log n)$, for an arbitrarily small constant $\epsilon > 0$. The previous state-of-the-art is $\min \{O(a \sqrt{\log n} + \log n), \ 2^{O(\sqrt{\log n})}\}$ due to \cite{BE08, PS95}. Hence our result is stronger in the wide range $\log^{1/2 + \Omega(1)} n \leq a \leq 2^{c \sqrt {\log n}}$, for some universal constant $c > 0$.


\noindent {\bf 1.3 Our Techniques and Overview of the Proof}

\noindent We employ a combination of a number of different existing approaches, together with a number of novel ideas. The first main building block is the machinery for distributed forests decomposition, developed by us in a previous paper \cite{BE08}. Specifically, it is known \cite{BE08} that a graph $G = (V,E)$ of arboricity $a$ can be efficiently distributedly decomposed into $O(a)$ edge-disjoint forests in $O(\log n )$ time. Moreover, these forests come with a {\em complete acyclic orientation} of the edges of $E$. In other words, both endpoints $u$ and $v$ of every edge $e = (u,v) \in E$ know the identity of the forest $F$ to which the edge $e$ ends up to belong, and the parent-child relation of $u$ and $w$ in $F$. In addition, this forests decomposition comes along with another useful graph decomposition, called {\em $H$-partition}. Roughly speaking, an $H$-partition is a decomposition of the {\em vertex} set $V$ of $G$ into $\ell = O(\log n)$ vertex sets $H_1,H_2,...,H_{\ell}$, such that each $G(H_i)$, $i = 1,2,...,\ell$ is a graph with maximum degree $O(a)$. (See Section 2.2 for more details.) This decomposition is extremely useful, as it allows one to apply algorithms that were devised for graphs of bounded degree on graphs of bounded arboricity. We will discuss this point further below.

The second main building block is the suite of algorithms for constructing {\em defective colorings}, developed by us in another previous paper (in STOC'09 \cite{BE09, BE09ar}), and by Kuhn (in SPAA'09 \cite{K09}). These algorithms enable one to efficiently decompose a graph $G$ of maximum degree $\Delta$ into $h = O(t^2)$ subgraphs $G'_1,G'_2,...,G'_h$, of maximum degree $\Delta' = O(\Delta/t)$ each. This decomposition was used in \cite{BE09,K09} for devising $(\Delta + 1)$-coloring algorithms that run in $O(\Delta + \log^* n)$ time. It is a natural idea to construct this decomposition and then to recurse on each of the subgraphs. However, unfortunately, the product $h \cdot \Delta'$ may be significantly larger than $\Delta$. Consequently, in this simplistic form this approach is doomed either to have a large running time or to use prohibitively many colors.

The approach that we employ in this paper is based on {\em arbdefective colorings}. While defective coloring is a classical graph-theoretic notion \cite{AJ85, CCW86, HJ85}, arbdefective coloring is a new concept that we introduce in this paper. It generalizes the notion of defective coloring. A coloring $\varphi$ is an {\em $r$-arbdefective $k$-coloring} if it employs $k$ colors, and each color class induces a subgraph of {\em arboricity} at most $r$.
We demonstrate that in a graph $G$ of arboricity $a$, an $r$-arbdefective $k$-coloring with $r \cdot k = O(a)$ can be efficiently computed. Here the combination of parameters is significantly better than in the case of defective colorings, and consequently, recursing on each of the subgraphs gives rise to an efficient algorithm for $O(a)$-coloring of the original graph $G$.

Therefore, the heart of our proof is an efficient algorithm for computing arbdefective colorings. Our algorithm for this task works in the following way. First, it computes an $H$-partition of the input graph $G$, i.e., it decomposes the vertex set of $G$ into subgraphs $H_1,H_2,...,H_{\ell}$ such that every subgraph has maximum degree of $O(a)$. Second, it computes $O(a/t)$-defective $t^2$-coloring $\varphi_i$ for each of the subgraphs $H_i$. Finally, it employs these colorings to compute a unified $O(a/t)$-arbdefective $t$-coloring $\varphi$ for the entire graph $G$. In other words, we show that a graph of small maximum {\em degree} can be efficiently colored by a good {\em arbdefective} coloring. We use this fact to construct good arbdefective colorings for graphs with bounded arboricity, by first decomposing these graphs into parts that have small maximum degrees.

We believe that this interplay between graphs with bounded maximum degree and graphs with bounded arboricity is very interesting. Our algorithm essentially "zig-zags" between the two families of graphs. First, it decomposes a graph with bounded arboricity into many subgraphs with bounded degree. Then it decomposes each of these subgraphs into subgraphs with bounded arboricity. It then merges these subgraphs in a certain subtle way, to obtain another decomposition into subgraphs with bounded arboricity. The algorithm then recurses on each of these subgraphs.

Another intricate part of our argument is the routine that computes an $O(\Delta/t)$-arbdefective $t$-coloring of a graph $G$ with maximum degree at most $\Delta$. In this part of the proof we manipulate with orientations in a novel way. A {\em complete orientation} $\sigma$ assigns a direction to each edge $e = (u,w)$ of $G$. Orientations play a central role in the theory of distributed graph coloring  \cite{CV86, L92, KSOS06}.
We introduce the notion of {\em partial orientations}. A partial orientation $\sigma$ is allowed not to orient some edges of the graph. Specifically, we say that $\sigma$ has {\em deficit} at most $d$, for some positive integer parameter $d$, if for every vertex $v$ in the graph the number of edges incident to $v$ that $\sigma$ does not orient is no greater than $d$. Another important parameter of an orientation $\sigma$ is its {\em length}, defined as the length of the longest path $P$ in which all edges are oriented consistently according to $\sigma$.
We demonstrate that partial orientations with appropriate deficit and length parameters can be constructed efficiently. Moreover, these orientations turn out to be extremely useful for computing arbdefective colorings. We believe that the notion of partial orientation, and our technique of constructing these orientations are of independent interest.

\noindent {\bf 1.4 Related Work}

\noindent There is an enormous amount of literature on distributed graph coloring. Already before the work of Linial, Cole and Vishkin \cite{CV86} devised a deterministic $3$-coloring algorithm with running time $O(\log^* n)$ for oriented \footnote[1]{In an {\em oriented} ring each vertex $v$ knows which of its two neighbors is located in the clockwise direction from $v$, and which is located in the counter-clockwise direction.} rings. In STOC'87 Goldberg and Plotkin \cite{GP87} generalized the algorithm of \cite{CV86} and obtained a $(\Delta + 1)$-coloring algorithm with running time $2^{O(\Delta)} + O(\log^* n)$. Also, Goldberg, Plotkin, and Shannon \cite{GPS88} devised a $(\Delta + 1)$-coloring algorithm with running time $O(\Delta \log n)$. In FOCS'89 Awerbuch, Goldberg, Luby and Plotkin \cite{AGLP89} devised a $2^{O(\sqrt{\log n \log \log n})}$-time deterministic algorithm for the MIS, and consequently, for the $(\Delta + 1)$-coloring problem. In STOC'92 Panconesi and Srinivasan \cite{PS95} improved this upper bound to $2^{O(\sqrt{\log n})}$. More recently, in PODC'06 Kuhn and Wattenhofer \cite{KW06} devised a $(\Delta + 1)$-coloring algorithm with running time $O(\Delta \log \Delta + \log^* n)$. In STOC'09 Barenboim and Elkin \cite{BE09, BE09ar}, and independently Kuhn \cite{K09} in SPAA'09, devised a $(\Delta + 1)$-coloring algorithm with running time $O(\Delta + \log^* n)$.

Another related thread of study is the theory of distributed graph decompositions. (See the book of Peleg \cite{P00} for an excellent in-depth survey of this topic.) In particular, Awerbuch et al. \cite{AGLP89} and Panconesi and Srinivasan \cite{PS95} showed that any $n$-vertex graph $G$ can be efficiently decomposed into disjoint regions of diameter $2^{O(\sqrt{\log n})}$, so that the super-graph induced by contracting each region into a super-vertex has arboricity $2^{O(\sqrt{\log n})}$. Linial and Saks \cite{LS92} proved another important related result of this kind. Specifically, they showed that $G$ can be decomposed into regions of diameter $O(\log n)$, so that the induced super-graph has chromatic number $O(\log n)$. Both these results were used in \cite{AGLP89, LS92, PS95} for devising efficient coloring and MIS algorithms. (The algorithms of \cite{AGLP89, PS95} are deterministic, and the algorithm of \cite{LS92} is randomized.)
Note, however, that these decompositions are inherently different from the ones that we develop, in a number of ways.


We remark that the algorithmic scheme of \cite{AGLP89, LS92, PS95} that utilizes graph decompositions stipulates that on each round only a small portion of all vertices (specifically, vertices that belong to regions of a given color) are active. This approach is inherently suboptimal, as it does not exploit the network parallelism to the fullest possible extent. In our approach, on the contrary, once the original graph is decomposed into subgraphs the algorithm recurses {\em in parallel on all subgraphs}. In this way all vertices are active at (almost) \footnote[1]{In some branches the recursion may proceed faster than in others. This may result in some vertices becoming idle sooner than other vertices.} all times. This extensive utilization of parallelism is the key to the drastically improved running time of our algorithms.
 

\section{Preliminaries}

\noindent {\bf 2.1 Definitions and Notation}

\noindent Unless the base value is specified, all logarithms in this paper are to base 2.\\
The {\em out-degree} of a vertex $v$ in a directed graph is the number of edges incident to $v$ that are oriented out of $v$.
An {\em orientation} $\sigma$ of (the edge set of) a  graph is an assignment of direction to each edge $(u,v) \in E$, either towards $u$ or towards $v$. 
A {\em partial orientation} is an orientation of a subset $E' \subseteq E$. Edges in $E \setminus E'$ have no orientation. 
The {\em length} of a vertex $v$ with respect to an orientation $\sigma$, denoted $len(v)$ $=len_{\sigma}(v)$, is the length $\ell$ of the longest directed path $<v = v_0,v_1,...,v_{\ell}>$ that emanates from $v$, where all edges $(v_i,v_{i+1})$, for $i = 0,1,...\ell-1$, are oriented by $\sigma$ towards $v_{i+1}$.
The {\em length} of a (partial) orientation $\sigma$, denoted $len(\sigma)$, is the maximum length of a vertex $v$ with respect to the orientation. 
The {\em deficit} of a vertex $v$ with respect to a partial orientation $\sigma$ is the number of edges $e$ that are unoriented by $\sigma$, and incident to $v$. The {\em deficit of $\sigma$} is the maximum deficit of a vertex $v \in V$ with respect to $\sigma$.
The {\em out-degree} of an orientation $\sigma$ of a graph $G$ is the maximum out-degree of a vertex in $G$ with respect to $\sigma$. 
In a given orientation, each neighbor $u$ of $v$ that is connected to $v$ by an edge oriented towards $u$  is called a {\em parent} of $v$.
In this case we say that $v$ is a {\em child} of $u$.\\
A coloring $\varphi: V \rightarrow \MathN$ that satisfies $\varphi(v) \neq \varphi(u)$ for each edge $(u,v) \in E$ is called a {\em legal coloring}. The minimum number of colors that can be used in a legal coloring of a graph $G$ is called {\em the chromatic number} of $G$. It is denoted  $\chi(G)$.\\
An {\em $m$-defective $p$-coloring} of a graph $G$ is a coloring of the vertices of $G$ using $p$ colors, such that each vertex has at most $m$ neighbors colored by its color. Each color class in the $m$-defective coloring induces a graph of maximum degree $m$. It is known that for any positive integer parameter $p$, an $\left \lfloor \Delta / p \right \rfloor$-defective $O(p^2)$-coloring can be efficiently computed distributively \cite{BE09, BE09ar,K09}.
\begin{lem} \label{dcol} \cite{K09}
A $\left \lfloor \Delta / p \right \rfloor$-defective $O(p^2)$-coloring can be computed in $O(\log^* n)$ time. 
\end{lem}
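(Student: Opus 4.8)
The plan is to reduce the number of colors from an easy starting coloring down to $O(p^2)$ by a Linial-style sequence of one-round local recoloring steps, each based on low-degree polynomials over a finite field. First I would invoke Linial's algorithm to obtain, in $O(\log^* n)$ time, an ordinary proper $O(\Delta^2)$-coloring $\varphi_0$; this serves as the input to the defective reduction and already fixes the overall running time, since everything that follows uses only a constant (or at most $O(\log^* n)$) number of additional rounds.

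The heart of the argument is a single finite-field recoloring step. Fix a prime power $q$ and an integer degree bound $t$ with $q^{t+1} \ge m$, where $m$ is the current number of colors, and identify each of the $m$ colors with a distinct polynomial $g$ of degree at most $t$ over $GF(q)$. A vertex $v$ currently colored by $g_v$ considers the $q$ candidate colors $(x, g_v(x)) \in GF(q) \times GF(q)$, one for each $x \in GF(q)$, and selects the value $x = x_v$ attained together with the fewest same-colored neighbors. Two distinct polynomials of degree at most $t$ agree on at most $t$ points, so for a properly colored neighborhood the total number of agreements, summed over all $q$ choices of $x$ and over the at most $\Delta$ neighbors, is at most $\Delta \cdot t$. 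By averaging, some $x$ witnesses at most $\Delta t / q$ neighbors sharing $v$'s value; since two adjacent vertices receive the same new color $(x, g(x))$ only if they choose the same $x$ and agree there, the resulting $q^2$-coloring has defect at most $\lfloor \Delta t / q \rfloor$. This is exactly the desired trade-off: it converts a proper $m$-coloring into a defective $q^2$-coloring, and a single communication round suffices because each vertex only needs the colors of its neighbors.

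To instantiate the parameters I would set $q = \Theta(p)$ and pick the smallest $t$ with $q^{t+1} \ge \Delta^2$, so that the defect is $\Delta t/q = O(\Delta t/p)$ and the palette is $q^2 = O(p^2)$. When $p$ is polynomially related to $\Delta$ (say $p \ge \Delta^{\nu}$ for a constant $\nu > 0$) this already forces $t = O(1)$, hence defect $\lfloor \Delta/p\rfloor$ and $O(p^2)$ colors in a single step. For the full range of $p$ I would apply the recoloring step iteratively, in the spirit of Linial's color reduction, so that the palette shrinks from $O(\Delta^2)$ toward $O(p^2)$ over a few rounds; the subtle point is that a defective step can only re-separate neighbors that currently have distinct colors, so pairs of neighbors that already coincide from an earlier step contribute additively to the defect of the next step.

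The step I expect to be the main obstacle is precisely this parameter balancing under iteration: landing the number of colors at $O(p^2)$ while keeping the total accumulated defect at $\lfloor \Delta/p \rfloor$. Because the per-step defect contributions add across iterations, the analysis must arrange the field sizes $q_1, q_2, \ldots$ so that these contributions form a rapidly decreasing (essentially geometric) sequence whose sum stays within the budget $\Delta/p$, while the final field size is $\Theta(p)$. Verifying that such a schedule exists, terminates in $O(\log^* n)$ rounds, and simultaneously meets both the color and the defect targets is the delicate technical core; once it is in place, the $O(\log^* n)$ running time follows immediately from the Linial preprocessing and the constant cost of each recoloring round.
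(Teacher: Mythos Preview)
The paper does not prove this lemma; it is quoted from Kuhn~\cite{K09} as a black box. So there is no ``paper's own proof'' to compare against here. That said, your sketch is essentially Kuhn's argument, and the paper's Section~5 (Lemma~\ref{arbkuhn} and the surrounding discussion) adapts exactly this machinery to the arbdefective setting: a family of functions $\{\varphi_x\}$ with bounded pairwise agreement (polynomials being one instantiation), the one-round recoloring rule ``pick the $\alpha$ minimizing collisions,'' the averaging/counting argument bounding the defect, and an $O(\log^* M)$-step iteration with a geometric schedule of intermediate defect targets. Your identification of the delicate point---that defects accumulate additively across iterations and the field sizes must be scheduled so the contributions sum to $\Delta/p$ while the final palette is $O(p^2)$---is exactly the content of Kuhn's Theorem~4.9, which the paper also invokes verbatim. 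In short: correct approach, matching the cited source.
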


We conclude this section by defining the notion of arbdefective coloring.
This notion generalizes the notion of defective coloring.

\begin{dfn}
An {\em $r$-arbdefective $k$-coloring} is a coloring with $k$ colors such that all the vertices colored by the same color $i$, $1 \leq i \leq k$, induce a subgraph of arboricity at most $r$.
\end{dfn}

\noindent {\bf 2.2 Forests-Decomposition}

\noindent A {\em $k$-forests-decomposition} is a partition of the edge set of the graph into $k$ subsets, such that each subset forms a forest.
Efficient distributed algorithms for computing $O(a)$-forests decompositions have been devised recently in \cite{BE08}. Several results from \cite{BE08} are used in the current paper. They are summarized in the following lemmas.
\begin{lem} \label{arbcol} \cite{BE08}
(1) For any graph $G$, a legal $(\left\lfloor (2 + \epsilon ) \cdot a \right \rfloor + 1)$-coloring of $G$ can be computed in $O(a \log n)$ time, for an arbitrarily small positive constant $\epsilon$.

(2)
For any graph $G$, an $O(a)$-forests-decomposition can be computed in $O(\log n)$ time.
\end{lem}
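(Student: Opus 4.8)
The plan is to base both parts on the Nash--Williams characterization of arboricity together with an iterative ``peeling'' of low-degree vertices, which produces the $H$-partition described in the overview.

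First I would establish the combinatorial backbone. By Nash--Williams, a graph of arboricity $a$ satisfies $|E(H)| \le a\,(|V(H)| - 1)$ for every subgraph $H$; in particular every subgraph on $m$ vertices has fewer than $a\,m$ edges, so its average degree is below $2a$ and it contains a vertex of degree at most $2a - 1 \le (2+\epsilon)a$. This lets me define the $H$-partition $H_1, H_2, \dots, H_\ell$ greedily: let $H_1$ be the set of vertices of degree at most $(2+\epsilon)a$ in $G$, and, having removed $H_1, \dots, H_{i}$, let $H_{i+1}$ be the set of vertices of degree at most $(2+\epsilon)a$ in the remaining graph $G_{i}$. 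The key quantitative claim is that $\ell = O(\log n)$: in each $G_i$ the number of vertices of degree exceeding $(2+\epsilon)a$ is at most $\frac{2\,|E(G_i)|}{(2+\epsilon)a} \le \frac{2}{2+\epsilon}\,|V(G_i)|$, so the vertex count shrinks by the constant factor $\frac{2}{2+\epsilon} < 1$ at every step. Computing the partition costs $O(\ell) = O(\log n)$ rounds, since identifying $H_{i+1}$ only requires each surviving vertex to count its surviving neighbors.

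For part (2) I would orient every edge from its endpoint in the lower-indexed layer towards the endpoint in the higher-indexed layer, breaking ties inside a layer by vertex identity. By construction this orientation is acyclic, and every vertex $v \in H_i$ has out-edges only to its neighbors in layers $\ge i$, of which there are at most $(2+\epsilon)a$ by the defining property of the $H$-partition; hence the out-degree is at most $d := \lfloor (2+\epsilon)a \rfloor$. Having each vertex assign distinct labels $1, \dots, d$ to its outgoing edges, the edges carrying a common label form a graph of out-degree at most one which, being a subgraph of an acyclic orientation, contains no cycle and is therefore a forest. This yields an $O(a)$-forests-decomposition, and every step is purely local given the orientation, so the total time is dominated by computing the $H$-partition, i.e.\ $O(\log n)$.

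For part (1) I would reuse the same acyclic orientation and color the layers top-down, from $H_\ell$ to $H_1$. When layer $H_i$ is processed, all of its neighbors in higher layers are already colored, and each $v \in H_i$ has at most $(2+\epsilon)a$ neighbors in layers $\ge i$ in total; consequently at least one of the $\lfloor (2+\epsilon)a \rfloor + 1$ colors avoids both the already-colored parents and the same-layer neighbors, so the palette suffices. The subtle point, and the one I expect to be the main obstacle, is to realize this layer coloring distributively within the claimed $O(a \log n)$ budget: greedily coloring along the within-layer orientation could follow a directed path as long as $|H_i|$, so one cannot simply color from sinks to sources. Instead I would treat each layer $G(H_i)$ --- a graph of maximum degree at most $(2+\epsilon)a$ --- as a list-coloring instance in which each vertex's list is its palette minus the colors of its already-colored higher-layer parents; the counting above guarantees that each list strictly exceeds the vertex's within-layer degree. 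Invoking a bounded-degree coloring subroutine that resolves such an instance in $O(a)$ rounds per layer, and summing over the $\ell = O(\log n)$ layers, gives the stated $O(a \log n)$ running time. The careful accounting that keeps the per-layer cost at $O(a)$ while simultaneously respecting the constraints imposed by the already-colored layers is where the real work lies.
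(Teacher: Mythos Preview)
The paper does not prove this lemma here; it is quoted from \cite{BE08}, so there is no in-paper proof to compare against. That said, your sketch is essentially the argument of \cite{BE08}: Nash--Williams plus iterative peeling yields the $H$-partition in $O(\log n)$ rounds; orienting each edge toward the higher layer gives an acyclic orientation of out-degree at most $\lfloor(2+\epsilon)a\rfloor$, and labeling the outgoing edges produces the forests; processing layers top-down then gives the coloring.

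One genuine deviation worth flagging is the within-layer step for part~(1). The route taken in \cite{BE08}, as reflected in Lemma~\ref{comp-orient} of the present paper, is not per-layer list-coloring. Instead one first computes a \emph{legal} $O(a)$-coloring of each $G(H_i)$ (a graph of maximum degree $O(a)$), and then uses the pair $(H\text{-index},\text{within-layer color})$ to obtain a complete acyclic orientation of length $O(a\log n)$; the $(\lfloor(2+\epsilon)a\rfloor+1)$-coloring is then produced by having every vertex wait for its parents under this orientation and pick the first free color, which takes time proportional to the orientation's length. Your alternative---treat each layer as a $(\deg+1)$-list-coloring instance with lists pruned by the already-colored parents---is correct in principle, but it presupposes a distributed list-coloring primitive running in $O(a)$ rounds per layer, which is an additional black box the original argument does not need. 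If you keep your route, you should name the primitive you are invoking and justify its running time; otherwise the orientation-of-length-$O(a\log n)$ argument is the cleaner way to realize the $O(a\log n)$ bound.
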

Moreover, the algorithm in \cite{BE08} for computing forests-decompositions produces a vertex partition with a certain helpful property, called an $H$-partition. An {\em $H$-partition} is a partition of $V$ into subsets $H_1,H_2,...,H_{\ell}$, $\ell = O(\log n)$, such that each vertex in $H_i$, $1 \leq i \leq \ell$, has at most $O(a)$ neighbors in $\bigcup_{j = i}^{\ell} H_j$. The {\em degree of the $H$-partition} is the maximum number of neighbors of a vertex $v \in H_i$ in $\bigcup_{j = i}^{\ell} H_j$ for $1 \leq i \leq \ell$. For a vertex $v \in V$, the {\em $H$-index} of $v$ is the index $i$, $ 1 \leq i \leq \ell$, such that $v \in H_i$.
\begin{lem} \label{hpart} \cite{BE08}
For any graph $G$, an $H$-partition of the vertex set of $G$ can be computed in $O(\log n)$ time. The degree of the computed $H$-partition is $\left \lfloor (2 + \epsilon) \cdot a \right \rfloor$, for an arbitrarily small positive constant $\epsilon$.
\end{lem}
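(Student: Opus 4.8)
The plan is to construct the $H$-partition by repeatedly peeling off all vertices of small residual degree, and to derive both the degree bound and the $O(\log n)$ time bound from the edge-sparsity that arboricity $a$ forces on \emph{every} subgraph. The starting point is the elementary half of the Nash--Williams characterization: if $E$ is covered by $a$ forests, then for any vertex set $S$ with $|S| = s$ the induced subgraph $G(S)$ contains at most $a(s-1) < a s$ edges, since each of the $a$ forests contributes at most $s-1$ edges to $G(S)$. Hence the average degree in $G(S)$ is strictly less than $2a$, and this holds for every nonempty $S$.

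From this I would extract the crucial counting fact: in every nonempty subgraph $G(S)$, the number of vertices of degree at most $\left\lfloor (2+\epsilon) a \right\rfloor$ is at least a $\frac{\epsilon}{2+\epsilon}$-fraction of $|S|$. Indeed, if a $p$-fraction of the vertices of $S$ had degree exceeding $\left\lfloor (2+\epsilon) a \right\rfloor \geq (2+\epsilon)a$, the degree sum would exceed $p\,(2+\epsilon)\,a\,s$; comparing with the bound $\sum_{v \in S}\deg(v) < 2as$ forces $p < \frac{2}{2+\epsilon}$, so the low-degree vertices number at least $\left(1 - \frac{2}{2+\epsilon}\right) s = \frac{\epsilon}{2+\epsilon}\, s$.

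The algorithm then proceeds in phases. Maintaining the residual vertex set $V_i = V \setminus (H_1 \cup \cdots \cup H_{i-1})$, with $V_1 = V$, in phase $i$ every vertex computes its degree inside $G(V_i)$ from the messages of its surviving neighbors, and every vertex of residual degree at most $\left\lfloor (2+\epsilon) a \right\rfloor$ joins $H_i$ and announces this to its neighbors. Each phase costs $O(1)$ rounds. By construction a vertex $v \in H_i$ has at most $\left\lfloor (2+\epsilon) a \right\rfloor$ neighbors in $V_i = \bigcup_{j=i}^{\ell} H_j$, which is exactly the required bound on the degree of the $H$-partition. By the counting fact each phase removes at least a $\frac{\epsilon}{2+\epsilon}$-fraction of the surviving vertices, so $|V_{i+1}| \le \left(1 - \frac{\epsilon}{2+\epsilon}\right)|V_i|$; the residual set empties after $\ell = O\!\left(\frac{1}{\epsilon}\log n\right) = O(\log n)$ phases, giving both the bound $\ell = O(\log n)$ on the number of layers and the claimed running time.

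The step I expect to be the most delicate is pinning the constant at $(2+\epsilon)$ rather than at $2$ exactly: the threshold $2a$ would make the guaranteed fraction of removable vertices degenerate to zero, so the slack $\epsilon > 0$ is precisely what converts the average-degree bound into a \emph{constant} per-phase removal rate, and hence into the geometric decay underlying the $O(\log n)$ depth. Everything else is bookkeeping: verifying that residual degrees can be recomputed locally in $O(1)$ rounds per phase, and that the phases require no global coordination, since each vertex learns its own layer index $i$ the moment it is peeled.
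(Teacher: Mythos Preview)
The paper does not actually prove this lemma here; it is quoted from \cite{BE08} without argument, so there is no in-paper proof to compare against. Your proposal is precisely the peeling argument of \cite{BE08}: iteratively strip the vertices of residual degree at most $\lfloor(2+\epsilon)a\rfloor$, use the Nash--Williams edge bound to show each phase removes a constant fraction, and conclude that $O(\log n)$ phases suffice. That is correct and is the intended approach.

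One small slip: you write ``degree exceeding $\lfloor(2+\epsilon)a\rfloor \geq (2+\epsilon)a$,'' but of course $\lfloor x\rfloor \le x$. The inequality you actually need is that an \emph{integer} degree exceeding $\lfloor(2+\epsilon)a\rfloor$ is at least $\lfloor(2+\epsilon)a\rfloor+1 > (2+\epsilon)a$, which then feeds the degree-sum bound exactly as you intend. With that cosmetic fix the argument is complete.
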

The $H$-partition is used to compute an acyclic orientation such that each vertex has out-degree $O(a)$.
\begin{lem} \cite{BE08}
For any graph $G$, an acyclic complete orientation with out-degree $O(a)$ can be computed in $O(\log n)$ time.
\end{lem}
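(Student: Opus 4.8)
The plan is to build the orientation directly from the $H$-partition guaranteed by Lemma~\ref{hpart}, using the defining property of that partition to control the out-degree and acyclicity simultaneously. First I would invoke Lemma~\ref{hpart} to compute, in $O(\log n)$ time, an $H$-partition $H_1, H_2, \dots, H_{\ell}$ of degree $\lfloor (2+\epsilon) a \rfloor = O(a)$; recall this means every vertex $v \in H_i$ has at most $O(a)$ neighbors in $\bigcup_{j=i}^{\ell} H_j$. Each vertex learns its own $H$-index as part of this computation.

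Next I would orient the edges according to a strict total order on the vertices. For a vertex $v$ write $h(v)$ for its $H$-index, and order the vertices lexicographically by the pair $(h(v), id(v))$; since identity numbers are unique this is a strict total order. For each edge $(u,v)$, orient it from the smaller endpoint toward the larger endpoint in this order: if $h(u) < h(v)$ the edge points out of $u$, and if $h(u) = h(v)$ it points out of the endpoint with the smaller identity number. To realize this distributedly, each vertex spends one additional round broadcasting the pair $(h(v), id(v))$ to all its neighbors, after which every vertex decides the orientation of each incident edge locally. The total running time is therefore $O(\log n) + O(1) = O(\log n)$.

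It then remains to verify the two required properties. Acyclicity is immediate: every edge is oriented from a smaller to a larger element of a strict total order, so any directed path visits strictly increasing vertices and no directed cycle can exist. For the out-degree bound, consider a vertex $v \in H_i$ and an edge oriented out of $v$, say toward $w$. By the orientation rule either $h(w) > h(v) = i$, or $h(w) = i$ and $id(w) > id(v)$; in both cases $w \in \bigcup_{j=i}^{\ell} H_j$. Hence every out-neighbor of $v$ lies in $\bigcup_{j=i}^{\ell} H_j$, and by the degree bound of the $H$-partition there are at most $O(a)$ such neighbors. Thus the out-degree of the orientation is $O(a)$, completing the argument.

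I expect the only genuinely delicate point to be the treatment of intra-layer edges (those with $h(u) = h(v)$): a naive scheme that orients edges only across layers would leave these edges unoriented, so the identity-based tie-breaking is essential. One must then check that these same-layer out-neighbors are still counted among the $O(a)$ neighbors in $\bigcup_{j=i}^{\ell} H_j$, which they are precisely because $H_i$ itself is included in that union.
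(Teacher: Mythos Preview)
Your proposal is correct. The paper does not actually prove this lemma---it is quoted from \cite{BE08} as background---and the construction you give (compute the $H$-partition, then orient each edge toward the lexicographically larger endpoint under $(h(\cdot),id(\cdot))$) is essentially the one used there. It is worth noting, for contrast, that the paper's own Procedure Complete-Orientation (Lemma~\ref{comp-orient}) replaces your id-based tie-breaking inside each $H_i$ by a legal $O(a)$-coloring; this costs an extra $O(a)$ rounds but buys a bound of $O(a\log n)$ on the orientation's length, which your id-based rule does not provide (and which the present lemma does not require).
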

Finally, the relationship between arboricity and acyclic orientation is given in the following lemma. 
\begin{lem} \label{acyclic} \cite{BE08, E94}
If there exists an acyclic complete orientation of $G$ with out-degree $k$, then $a(G) \leq k$.
\end{lem}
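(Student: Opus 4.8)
The plan is to directly exhibit a decomposition of $E$ into at most $k$ forests, which certifies $a(G) \le k$ by the definition of arboricity. Fix an acyclic complete orientation $\sigma$ of $G$ with out-degree at most $k$. The key idea is to color the edges of $G$ with colors in $\{1, 2, \ldots, k\}$ so that at every vertex the \emph{outgoing} edges all receive distinct colors, and then to argue that each color class forms a forest.

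First I would construct the coloring. Since $\sigma$ has out-degree at most $k$, every vertex $v$ has at most $k$ edges oriented out of it. For each such $v$, assign to its outgoing edges pairwise distinct colors from $\{1, \ldots, k\}$; this is possible precisely because there are at most $k$ of them. Because the orientation is complete, every edge is outgoing from exactly one of its two endpoints, so this procedure assigns exactly one color to each edge, yielding a well-defined partition $E = E_1 \cup E_2 \cup \cdots \cup E_k$ into $k$ edge-disjoint sets.

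The heart of the argument is to show that each $E_i$ is a forest. By construction, within $E_i$ every vertex has out-degree at most $1$ with respect to $\sigma$, and $E_i$ inherits acyclicity from $\sigma$, since any directed cycle in $E_i$ would be a directed cycle in $G$ under $\sigma$. Suppose for contradiction that the undirected graph underlying $E_i$ contains a cycle $C$ on vertices $v_0, v_1, \ldots, v_{m-1}$. The cycle has $m$ edges and $m$ vertices, and each edge is oriented out of exactly one of its two endpoints, so the out-degrees taken within $C$ sum to $m$. Since each vertex has out-degree at most $1$ in $E_i$, and hence at most $1$ within $C$, every vertex of $C$ must in fact have out-degree exactly $1$ within $C$. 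Following the unique outgoing edge from each vertex then produces a directed walk on the finite vertex set of $C$ that never terminates, so it must revisit a vertex and close a directed cycle, contradicting acyclicity of $\sigma$. Hence $E_i$ contains no undirected cycle, i.e., it is a forest.

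The main obstacle is exactly this last claim — that a subgraph which is simultaneously acyclic as a directed graph and of out-degree at most $1$ must be a forest — and the care needed lies in the counting step that forces every vertex of a putative undirected cycle to have out-degree exactly one, so that acyclicity can be invoked to derive a contradiction. Once all $k$ color classes are shown to be forests, $E$ has been partitioned into at most $k$ edge-disjoint forests, and therefore $a(G) \le k$ by definition.
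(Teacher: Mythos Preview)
Your proof is correct. The paper does not actually prove this lemma; it is stated with citations to \cite{BE08, E94} as a known fact. Your argument---color the outgoing edges at each vertex with distinct labels in $\{1,\ldots,k\}$, then show that each color class, being acyclic with out-degree at most $1$, must be a forest---is exactly the standard proof of this fact, and your counting step forcing every vertex on a putative undirected cycle to have out-degree exactly one is the right way to derive the contradiction.
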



\section{Small Arboricity Decomposition}

We begin with presenting a simple algorithm that computes an $O(a/k)$-arbdefective $k$-coloring for any integer parameter $k > 0$. 
(In other words, it computes a vertex decomposition into $k$ subgraphs such that each subgraph has arboricity $O(a/k)$.) 
The running time of our first algorithm is $O(a \log n)$. Later, we present an improved version of the algorithm with a significantly faster running time.

Suppose that we are given an acyclic complete orientation of the edge set of $G$, such that each vertex has at most $m$ outgoing edges, for a positive parameter $m$. The following procedure, called {\em Procedure Simple-Arbdefective} accepts as input such an orientation and a positive integer parameter $k$. During its execution, each vertex computes its color in the following way. The vertex waits for all its parents to select their colors. (Recall that a parent of a vertex $v$ is a neighbor $u$ of $v$ connected by an edge $\langle v,u \rangle$ that is oriented towards $u$.) Once the vertex receives a message from each of its parents containing their selections, it selects a color from the range $\{ 1,2,...,k \}$ that is used by the minimum number of parents. Then it sends its selection to all its neighbors. This completes the description of the procedure.

Let $c$ be the color that a vertex $v$ has selected. Since $v$ has at most $m$ parents, by the Pigeonhole Principle, the number of parents colored by the color $c$ is at most $\left \lfloor m/k \right \rfloor$.  For $c = 1,2,...,k$, consider the subgraph $G_c$ induced by all the vertices that have selected the color $c$. For each edge $e$ in $G_c$, orient $e$ in the same way it is oriented in the original graph $G$. The orientation in $G_c$ is therefore acyclic, and each vertex in $G_c$ has out-degree at most $\left \lfloor m/k \right \rfloor$. Thus, the arboricity of $G_c$ is at most $\left \lfloor m/k \right \rfloor$ (See Lemma \ref{acyclic}). Hence Procedure Simple-Arbdefective has produced an $\left \lfloor m/k \right \rfloor$-arbdefective $k$-coloring.


Next, we consider a more general scenario in which instead of accepting as input a complete orientation we are given a partial orientation. Specifically, the orientation that Procedure Simple-Arbdefective accepts as input has out-degree at most $m$, and deficit at most $\tau$.
Once the procedure is invoked on such an orientation and a parameter $k$ as input, a coloring with $k$ colors is produced. Consider the graph $G_c$ induced by all the vertices that are colored by the color $c$, $1 \leq c \leq k$. Each edge in $G_c$ is oriented in the same way as in $G$. Each vertex in $G_c$ has at most $\left \lfloor m/k \right \rfloor$ parents, and at most $\tau$ unoriented edges connected to it in $G_c$. The following lemma states that it is possible to orient all unoriented edges of $G_c$ to achieve a complete acyclic orientation.

\begin{lem} \label{compacyclic}
Any acyclic partial orientation $\sigma$ of a graph $G = (V,E)$ can be transformed into a complete acyclic orientation by adding orientation to unoriented edges.
\end{lem}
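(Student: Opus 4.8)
The plan is to reduce the claim to a standard fact about acyclic orientations via topological-sort ideas. The statement says that any acyclic partial orientation $\sigma$ of $G$ can be completed to a full acyclic orientation by orienting the currently-unoriented edges. The key obstruction I expect to worry about is \emph{consistency}: when I orient a previously-unoriented edge, I must not create a directed cycle together with the edges that $\sigma$ already oriented, nor with other edges I orient along the way.

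First I would observe that an acyclic partial orientation $\sigma$ induces a partial order (strictly, a directed acyclic graph) on $V$: write $u \prec v$ whenever there is a directed path from $u$ to $v$ using only $\sigma$-oriented edges. Since $\sigma$ is acyclic, $\prec$ has no cycles, so it extends to a total order $<$ on $V$ (a topological sort of the DAG formed by the oriented edges). Concretely, assign to each vertex a distinct rank $\rho: V \to \{1, 2, \dots, n\}$ such that whenever $(u,v)$ is oriented by $\sigma$ towards $v$, we have $\rho(u) < \rho(v)$; such a ranking exists precisely because the oriented-edge DAG is acyclic.

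Next I would use this ranking to orient every unoriented edge: for each edge $e = (u,w) \in E \setminus E'$, orient $e$ from the lower-rank endpoint to the higher-rank endpoint, i.e.\ towards $\mathrm{argmax}\{\rho(u), \rho(w)\}$. Call the resulting complete orientation $\hat\sigma$. It is immediate that $\hat\sigma$ agrees with $\sigma$ on every already-oriented edge, since $\sigma$-oriented edges already point from lower to higher rank by construction of $\rho$.

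Finally I would verify that $\hat\sigma$ is acyclic. Every edge of $\hat\sigma$, whether inherited from $\sigma$ or newly oriented, points from a vertex of smaller rank to one of larger rank. Hence any directed path $v_0, v_1, \dots, v_\ell$ in $\hat\sigma$ satisfies $\rho(v_0) < \rho(v_1) < \cdots < \rho(v_\ell)$, so it can never return to its start; a directed cycle would force $\rho(v_0) < \rho(v_0)$, a contradiction. Therefore $\hat\sigma$ is a complete acyclic orientation obtained from $\sigma$ purely by adding orientations, which is exactly the claim. The only subtle point is the existence of the topological order, which follows from acyclicity of $\sigma$; once that is in hand the rest is a direct check, so I do not anticipate a genuinely hard step.
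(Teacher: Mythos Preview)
Your proposal is correct and follows essentially the same approach as the paper: form the DAG of $\sigma$-oriented edges, take a topological sort, and orient each remaining edge toward the endpoint that comes later in that order. Your write-up is slightly more explicit in verifying acyclicity via the strictly increasing rank along any directed path, but the underlying argument is identical.
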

\begin{proof}
Let $\hat{E}$ be the set of all edges oriented by $\sigma$. Since $\sigma$ is acyclic, the graph $\hat{G} = (V,\hat{E})$ is a directed acyclic graph. Perform a topological sort of $\hat{G}$ such that for any edge $\langle u,v \rangle$ that is oriented towards $v$, the vertex $v$ is placed after $u$. Orient each unoriented edge $(w,z)$ in $G$ towards the endpoint that appears later in the topological sorting of $\hat{G}$. It is easy to see that the resulting orientation is a complete acyclic orientation of $G$.
\end{proof}

Once the unoriented edges of $G_c$ are oriented as in the proof of Lemma \ref{compacyclic}, each vertex $v$ in $G_c$ has an out-degree at most $\tau + \left\lfloor m/k \right \rfloor$. (Recall that $v$ had at most $\tau$ unoriented edges incident to it.) Hence, by Lemma \ref{acyclic}, the arboricity of $G_c$ is at most $\tau + \left\lfloor m/k \right \rfloor$.
The next Theorem summarizes the properties of Procedure Simple-Arbdefective.


\begin{thm} \label{simp-arb}
Suppose that Procedure Simple-Arbdefective is invoked with the following two input arguments:
(1) An acyclic (partial) orientation of length $\ell$, out-degree at most $m$, and deficit at most $\tau$.  \\
(2) An integer parameter $k > 0$. \\
Then Procedure Simple-Arbdefective produces a \  $(\tau + \left\lfloor m/k \right \rfloor)$-arbdefective $k$-coloring in $O(\ell)$ time. 
\end{thm}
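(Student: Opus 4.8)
The plan is to verify the two assertions of the theorem — the arboricity of each color class and the $O(\ell)$ running time — separately, assembling the observations already laid out in the discussion preceding the statement and supplying the one piece (the timing) that the prose has not yet argued in detail.

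First I would bound the arboricity of each color class. Fix a color $c \in \{1,\dots,k\}$ and let $G_c$ be the subgraph induced by the vertices that selected $c$. Every such vertex $v$ has at most $m$ parents under $\sigma$; since $v$ chose a color used by the fewest of its parents, the Pigeonhole Principle guarantees that at most $\left\lfloor m/k \right\rfloor$ of its parents also received color $c$. Hence, restricting $\sigma$ to $G_c$ while keeping edge directions, $v$ has oriented out-degree at most $\left\lfloor m/k \right\rfloor$ in $G_c$, together with at most $\tau$ unoriented incident edges. The restriction of $\sigma$ to $G_c$ is a sub-orientation of an acyclic orientation and is therefore itself acyclic, so Lemma \ref{compacyclic} lets me complete it to a complete acyclic orientation of $G_c$ by directing the unoriented edges. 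Each vertex gains at most $\tau$ new outgoing edges, so its out-degree becomes at most $\tau + \left\lfloor m/k \right\rfloor$. By Lemma \ref{acyclic} it follows that $a(G_c) \le \tau + \left\lfloor m/k \right\rfloor$. Since $c$ was arbitrary and the procedure uses only the colors $\{1,\dots,k\}$, this yields a $(\tau + \left\lfloor m/k \right\rfloor)$-arbdefective $k$-coloring.

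Second I would bound the running time by the length parameter $\ell$. The key point is that a vertex commits only after all of its parents have fixed their colors, so the vertices are effectively processed in increasing order of their length $len_\sigma(v)$. I would prove by induction on $j$ that every vertex $v$ with $len_\sigma(v) = j$ has selected its color by the end of round $j+1$. For the base case $j = 0$, such a $v$ has no outgoing edge and hence no parent, so it may choose immediately in round $1$. For the inductive step, any parent $u$ of $v$ satisfies $len_\sigma(u) \le len_\sigma(v) - 1$, since prepending the edge $\langle v,u \rangle$ to a longest directed path emanating from $u$ produces a directed path emanating from $v$; by the induction hypothesis each parent has chosen by round $j$, so $v$ chooses in round $j+1$. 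As $len_\sigma(v) \le \ell$ for every vertex, the procedure terminates within $\ell + 1 = O(\ell)$ rounds.

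The arboricity half is essentially a bookkeeping assembly of the prose preceding the theorem, so I expect no real difficulty there. The step that warrants care is the running-time induction: I must confirm that ``waiting for all parents'' genuinely ties the round at which a vertex commits to its length in $\sigma$, and in particular that unoriented edges impose no waiting whatsoever — a vertex never blocks on a neighbor across an unoriented edge. This is precisely why the deficit $\tau$ influences only the arboricity bound and not the time, and verifying that the length of a \emph{partial} orientation remains the correct progress measure despite the presence of unoriented edges is the crux of the argument.
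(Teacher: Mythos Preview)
Your proposal is correct and follows essentially the same approach as the paper: the arboricity bound is obtained by combining the Pigeonhole argument with Lemmas \ref{compacyclic} and \ref{acyclic} exactly as the paper does, and the running-time bound is the same induction on $len_\sigma(v)$ (with only a harmless off-by-one shift in the round indexing). Your explicit remark that unoriented edges impose no waiting is a nice clarification that the paper leaves implicit.
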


\begin{proof}
By Lemmas \ref{acyclic}, \ref{compacyclic}, the arboricity of $G_c$, for $1 \leq k \leq c$, is at most $\tau + \left\lfloor m/k \right \rfloor$. Hence, Procedure Simple-Arbdefective produces a $(\tau + \left\lfloor m/k \right \rfloor)$-arbdefective $k$-coloring.

Next, we analyze the running time of Procedure Simple-Arbdefective. We prove by induction on $i$ that after $i$ rounds, all vertices $v \in V$ with $len(v) \leq i$, have selected their colors.

\textbf{Base $(i = 0)$:} The vertices $v$ with $len(v) = 0$ are the vertices that have no outgoing edges. Such vertices select an arbitrary color from the range $\{1,2,...,k\}$ immediately after the algorithm starts, requiring no communication whatsoever.

\textbf{Induction step:} Assume that after $i-1$ rounds, all vertices $v \in V$ with $len(v) \leq i-1$, have selected their colors. Let $u$ be a vertex with $len(u) \leq i$. Then, for each parent $w$ of $u$, it holds that $len(u) \leq i-1$. Consequently, by the induction hypothesis, all parents of $u$ select their color after at most $i-1$ rounds. Therefore, the vertex $u$ is aware of the selection of all its parents on round $i$ or before. Hence, after $i$ rounds the vertex $u$ necessarily selects a color. This completes the inductive proof.

If Procedure Simple-Arbdefective accepts as input an acyclic orientation of length $\ell$, then all directed paths are of length at most $\ell$. Consequently, all vertices select their color after at most $\ell$ rounds.
\end{proof}


For Procedure Simple-Arbdefective to be useful, we need to compute partial acyclic orientations with small length and out-degree. Next,  we devise efficient algorithms for computing appropriate acyclic orientations. First, we devise a distributed algorithm that receives as input an undirected graph $G$, and computes a complete acyclic orientation such that each vertex has out-degree $O(a)$. Observe that in a distributed computation of an orientation, each vertex has to compute only the orientation of edges incident to it, as long as the global solution formed by this computation is correct. The algorithm we devise is called {\em Procedure Complete-Orientation}.

Procedure Complete-Orientation consists of three steps. First, an $H$-partition of the input graph $G$ is computed. (See Section 2.2.) As a consequence, the vertex set of $G$ is partitioned into $\ell' = O(\log n)$ subsets $H_1,H_2,...,H_{\ell'}$, such that each vertex in $H_i$, $1 \leq i \leq \ell'$, has $O(a)$ neighbors in $\bigcup_{j = i}^{\ell'} H_j$. Next, each subgraph induced by a set $H_i$ is colored legally using $O(a)$ colors. Finally, an orientation is computed as follows. Consider an edge $(u,v)$ such that $u \in H_i$ and $v \in H_j$ for some $1 \leq i,j \leq \ell'$. If $i < j$, orient the edge towards $v$. If $j < i$ orient the edge towards $u$. Otherwise $i = j$. In this case the vertices $u$ and $v$ have different colors. Orient the edge towards the vertex that is colored with a greater color. This completes the description of the procedure. We summarize the properties of Procedure Complete-Orientation in the following lemma.

\begin{lem} \label{comp-orient}
The running time of Procedure Complete-Orientation is $O(a + \log n)$. It produces a complete acyclic orientation with out-degree $\left \lfloor (2 + \epsilon) \cdot a \right \rfloor$ for an arbitrarily small constant $\epsilon > 0$, and length $O(a \log n)$.
\end{lem}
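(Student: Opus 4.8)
The plan is to verify the three asserted properties—running time, out-degree together with acyclicity, and length—separately, reading each one off the three-step structure of Procedure Complete-Orientation and the guarantees of the $H$-partition (Lemma \ref{hpart}). The unifying observation, which I would isolate first, is that the procedure orients every edge toward the endpoint that is larger in the lexicographic order that first compares $H$-indices and breaks ties by color; once this is made explicit, acyclicity, the out-degree bound, and the length bound all become statements about this order.

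For the running time I would bound each step. Step 1 computes the $H$-partition in $O(\log n)$ time by Lemma \ref{hpart}, producing $\ell' = O(\log n)$ parts $H_1,\dots,H_{\ell'}$ of degree at most $\lfloor (2+\epsilon) a \rfloor$. For step 2, note that since $H_i \subseteq \bigcup_{j=i}^{\ell'} H_j$, the $H$-partition guarantee forces each induced subgraph $G(H_i)$ to have maximum degree at most $\lfloor (2+\epsilon) a \rfloor = O(a)$; because the sets $H_i$ are vertex-disjoint, all the $G(H_i)$ can be colored simultaneously and independently. Applying a $(\Delta'+1)$-coloring algorithm for graphs of maximum degree $\Delta' = O(a)$ (e.g. \cite{BE09, K09}, which runs in $O(\Delta' + \log^* n)$ time) thus colors every $G(H_i)$ legally with $O(a)$ colors in $O(a + \log^* n)$ time in parallel. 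Step 3 is purely local: each vertex decides the orientation of its incident edges by comparing $H$-indices and, in case of a tie, colors, costing $O(1)$ rounds. Summing the three steps and using $\log^* n \le \log n$ yields $O(a + \log n)$.

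For the out-degree and acyclicity I would argue directly from the lexicographic order. Two adjacent vertices always carry distinct pairs—distinct indices, or the same index and (by legality of the coloring of $G(H_i)$) distinct colors—so the orientation is well defined on every edge. Acyclicity is then immediate, since the pair strictly increases along any directed path and therefore no vertex can repeat. For the out-degree of $v \in H_i$, every out-neighbor of $v$ lies either in some $H_j$ with $j > i$, or in $H_i$ itself with a larger color; in both cases it belongs to $\bigcup_{j=i}^{\ell'} H_j$. By the degree guarantee of the $H$-partition the number of such neighbors is at most $\lfloor (2+\epsilon) a \rfloor$, which bounds the out-degree.

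The length bound is the part I expect to carry the real content, and it is where the lexicographic order pays off. Along any directed path $v_0 \to v_1 \to \cdots \to v_L$ the $H$-index is nondecreasing, and within any maximal run of consecutive vertices sharing a common index $i$ the colors strictly increase. Since $G(H_i)$ uses only $O(a)$ colors, each such run contains at most $O(a)$ vertices, and since there are only $\ell' = O(\log n)$ distinct indices the path decomposes into at most $O(\log n)$ such runs. Multiplying gives $L = O(a \log n)$, which is exactly the claimed bound on $len(\sigma)$. I would flag this counting step as the crux: the acyclicity and out-degree follow almost mechanically from the construction, whereas the length estimate is precisely what forces one to exploit simultaneously the $O(\log n)$ bound on the number of $H$-parts and the $O(a)$ bound on the palette used inside each part.
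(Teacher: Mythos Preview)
Your proof is correct and follows essentially the same approach as the paper: the running time is broken into the $O(\log n)$ cost of the $H$-partition, the $O(a+\log^* n)$ cost of the parallel $(\Delta'+1)$-colorings inside each $H_i$, and the $O(1)$ orientation step; the out-degree bound is read off the $H$-partition degree guarantee; and the length bound comes from decomposing any directed path into at most $O(\log n)$ monochromatic-index runs of length $O(a)$ each. Your explicit lexicographic-order framing is a slightly cleaner packaging (it makes acyclicity immediate rather than leaving it implicit in the finiteness of the length), but the underlying argument is identical to the paper's.
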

\begin{proof}
By Theorem \ref{hpart}, the first step, in which the $H$-partition is computed, requires $O(\log n)$ time. The second step consists of coloring graphs of maximum degree $O(a)$. All the colorings are performed in parallel in $O(a + \log^* n)$ time using the algorithm from \cite{BE09}. The orientation step requires a single round in which vertices learn the colors and the $H$-indices of their neighbors. To summarize, the total running time is $O(a + \log n)$.

Next, we show that the out-degree of each vertex is $O(a)$. Consider a vertex $v \in H_i$. Each outgoing edge of $v$ is connected to a vertex in a set $H_j$ such that $j \geq i$. By Lemma \ref{hpart}, $v$ has at most $\left \lfloor (2 + \epsilon) \cdot a \right \rfloor$ neighbors in $\bigcup_{j=i}^{\ell'} H_j$. Therefore, the out-degree of $v$ is $\left \lfloor (2 + \epsilon) \cdot a \right \rfloor$.
Next, we show that the length of the orientation is $O(a \log n)$. Consider a subgraph $G_i$ induced by a set $H_i$, $1 \leq i \leq \ell'$. Each edge in $G_i$ is oriented towards a vertex with a greater color among its two endpoints. Hence, a certain color appears in any directed path at most once. Consequently, the length of the longest directed path in $G_i$ is less than the number of colors used for coloring $G_i$. Since $G_i$ is colored using $O(a)$ colors, the length of the longest directed path in $G_i$ is $O(a)$. Consider an edge $(u,v)$ such that $u \in H_i, v \in H_j, i < j$. The edge $(u,v)$ is oriented towards the set with the greater index $H_j$. Therefore, each directed path in $G$ has at most $\ell' - 1$ edges whose endpoints belong to different $H$-sets. Inside any path, each two edges whose endpoints belong to different $H$-sets are separated be $O(a)$ consequent edges whose endpoints belong to the same $H$-set. Therefore, the length of any directed path is $O(a \cdot \ell') = O(a \log n)$.
\end{proof} 

Theorem \ref{simp-arb} and Lemma \ref{comp-orient} give rise directly to the following corollary.
\begin{col}
For an integer $k > 0$, an $O(a/k)$-arbdefective $k$-coloring can be computed in $O(a \log n)$ time.
\end{col}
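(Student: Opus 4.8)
The plan is to compose the two building blocks that have just been established, so no new machinery is needed. First I would invoke Procedure Complete-Orientation on the input graph $G$. By Lemma~\ref{comp-orient}, this yields a \emph{complete} acyclic orientation $\sigma$ with out-degree $m = \lfloor (2+\epsilon)\cdot a\rfloor = O(a)$ and length $\ell = O(a\log n)$, and it does so in $O(a + \log n)$ time. The key point to exploit is that because $\sigma$ is a complete orientation, no edge is left unoriented, and hence its deficit is $\tau = 0$.

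Next I would feed this orientation $\sigma$, together with the given parameter $k$, into Procedure Simple-Arbdefective. Substituting the parameters $\tau = 0$, $m = O(a)$, and $\ell = O(a\log n)$ into Theorem~\ref{simp-arb}, the procedure produces a $(\tau + \lfloor m/k\rfloor)$-arbdefective $k$-coloring, which with $\tau = 0$ reduces to a $\lfloor O(a)/k\rfloor = O(a/k)$-arbdefective $k$-coloring, in $O(\ell) = O(a\log n)$ time. Adding the two phases, the orientation step costs $O(a + \log n)$ and the coloring step costs $O(a\log n)$, for a total of $O(a\log n)$, matching the claimed bound.

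There is no genuine obstacle here, since the statement is a direct corollary: the one thing to verify carefully is the parameter bookkeeping, namely that a complete orientation has zero deficit so that the arbdefect term $\tau + \lfloor m/k\rfloor$ collapses to $\lfloor m/k\rfloor = O(a/k)$ rather than carrying an extra additive $\tau$ (which in the fully general partial-orientation setting of Theorem~\ref{simp-arb} would spoil the bound). Everything else follows by mechanically inserting the guarantees of Lemma~\ref{comp-orient} into Theorem~\ref{simp-arb}.
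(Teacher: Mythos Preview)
Your proposal is correct and follows exactly the paper's approach: the corollary is stated as arising directly from Theorem~\ref{simp-arb} and Lemma~\ref{comp-orient}, and your argument simply plugs the complete (hence deficit-zero) orientation of Lemma~\ref{comp-orient} into Theorem~\ref{simp-arb}, matching both the arbdefect bound and the running time.
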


The running time of Procedure Simple-Arbdefective is proportional to the length of the acyclic orientation that is given to the procedure as part of its input. Hence, to improve its running time, we have to compute a much shorter orientation. However, the shortest complete acyclic orientation of a graph $G$ is of length at least $\chi(G) - 1$. (Since a complete acyclic orientation of length $\ell$ allows one to color the graph legally with $\ell + 1$ colors. See Appendix A for more details.)
There exist graphs for which $\chi(G) = \Omega(a)$. Consequently, an acyclic complete orientation of length $o(a)$ does not always exist. We overcome this difficulty by computing a partial acyclic orientation instead. This partial orientation is significantly shorter, and its deficit is sufficiently small. Moreover, we show that a partial orientation can be computed considerably faster than a complete orientation. Also, in the computation of a partial orientation it is no longer required that the $H$-sets are legally colored, which is the case in Procedure Complete-Orientation. Instead it suffices to color the $H$-sets with a defective coloring, and this can be done far more efficiently. (See Lemma \ref{dcol}.)

The pseudocode of the algorithm for computing short acyclic orientations, called {\em Procedure Partial-Orientation} is given below. It receives as input a graph $G$ and a positive integer parameter $t$. It computes an orientation with out-degree $\left \lfloor (2 + \epsilon) \cdot a \right \rfloor$, and deficit at most $\left \lfloor a/t \right \rfloor$. Procedure Partial-Orientation is similar to Procedure Complete-Orientation, except step 2, in which an $\left\lfloor a/t \right \rfloor$-defective $O(t^2)$-coloring is computed instead of a legal $O(a)$-coloring.

\begin{algorithm}[H]
\caption{Procedure Partial-Orientation($G, t$) }

\label{proced:partial}

\begin{algorithmic}[1] 

\STATE $H_1$,$H_2$,...,$H_{\ell'}$ := an $H$-partition of $G$.

\FOR {$i = 1,2...,\ell'$ in parallel}

   \STATE compute an $\left \lfloor a/t \right \rfloor$-defective  $O(t^2)$-coloring $\varphi$ of $G(H_i)$.
   
\ENDFOR

\FOR {each edge $e = (u,v)$ in $E$ in parallel}

   \IF {$u$ and $v$ belong to different $H$-sets}
       
        \STATE orient $e$ towards the set with greater index.
        
   \ELSE 
      
        \IF {$u$ and $v$ have different colors}
            
            \STATE orient $e$ towards the vertex with greater color among $u,v$.
         
         \ENDIF
   \ENDIF
\ENDFOR

\end{algorithmic}
\end{algorithm}

The dominant term in the running time of Procedure Partial-Orientation is the computation of the $H$-partition that requires $O(\log n)$ time. The other steps are significantly faster, since computing defective colorings in lines 2-4 of the procedure requires $O(\log^* n)$ time, and the orientation step (lines 5-13) requires only $O(1)$ time. Therefore, the running time of Procedure Partial-Orientation is $O(\log n)$. Another important property of Procedure Partial-Orientation is that the length of the produced orientation is bounded. Consider a directed path in a subgraph $G(H_i)$, $ 1 \leq i \leq \ell'$. The length of this path is smaller than the number of colors used in the defective coloring of $G(H_i)$, which is $O(t^2)$. Now consider a directed path in the graph $G$ with respect to the orientation produced by Procedure Partial-Orientation. The path contains $O(\log n)$ edges that cross between different $H$-sets. Between any pair of such edges in the path there are $O(t^2)$ consequent edges whose endpoints belong to the same $H$-set. Hence, the length of a directed path in $G$ is $O(t^2 \log n)$. (See Figure 1 below.) The properties of Procedure Partial-Orientation are summarized in the next theorem.
\begin{thm} \label{part-orient}
Let $\epsilon$ be an arbitrarily small positive constant. Procedure Partial-Orientation invoked on a graph $G$ and an integer parameter $t > 0$ produces an acyclic orientation of out-degree $\left \lfloor (2 + \epsilon) \cdot a \right \rfloor$, length $O(t^2 \cdot \log n)$, and deficit at most $\left \lfloor a /t \right \rfloor$. The running time of the procedure is $O(\log n)$.
\end{thm}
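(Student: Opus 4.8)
The plan is to verify the four asserted properties one at a time, reusing the analysis of Procedure Complete-Orientation (Lemma \ref{comp-orient}) wherever the two procedures agree, and isolating the single change: replacing the legal $O(a)$-coloring of each $G(H_i)$ by an $\lfloor a/t \rfloor$-defective $O(t^2)$-coloring in Step~2. First I would handle the out-degree. As in Lemma \ref{comp-orient}, every edge outgoing from a vertex $v \in H_i$ leads either to a vertex in some $H_j$ with $j > i$ (a crossing edge, oriented towards the higher index) or to a vertex of $H_i$ carrying a strictly greater color (a within-set edge oriented towards the larger color). In both cases the head lies in $\bigcup_{j \geq i} H_j$, so the out-degree of $v$ is at most the number of its neighbors in $\bigcup_{j \geq i} H_j$, which by Lemma \ref{hpart} is at most $\lfloor (2 + \epsilon) \cdot a \rfloor$.

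Next I would treat the deficit. The only edges the procedure leaves unoriented are the within-set edges $(u,v)$ with $u,v \in H_i$ and $\varphi(u) = \varphi(v)$, that is, the monochromatic edges of the defective coloring of $G(H_i)$. Since that coloring is $\lfloor a/t \rfloor$-defective, each vertex has at most $\lfloor a/t \rfloor$ same-color neighbors inside its own $H$-set, so the deficit is at most $\lfloor a/t \rfloor$, as claimed.

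For acyclicity and the length bound I would exploit the two-level structure of directed paths simultaneously. Along any directed path the $H$-index is non-decreasing, since crossing edges strictly increase it and within-set edges leave it unchanged; this forbids any return to a smaller index. Inside a fixed $G(H_i)$ the color strictly increases along the path, so no maximal within-set segment can revisit a vertex. Together these rule out a directed cycle, giving acyclicity. For the length, a directed path contains at most $\ell' - 1 = O(\log n)$ crossing edges, and each maximal run of consecutive within-set edges lies in a single $G(H_i)$ and therefore has length less than the number of colors used there, namely $O(t^2)$; multiplying the two factors yields length $O(t^2 \log n)$.

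Finally, the running time splits along the three steps: the $H$-partition costs $O(\log n)$ by Lemma \ref{hpart}; all the $\lfloor a/t \rfloor$-defective $O(t^2)$-colorings run in parallel in $O(\log^* n)$ time by Lemma \ref{dcol}; and the orientation step is purely local once every vertex has learned the $H$-indices and colors of its neighbors, costing $O(1)$ rounds. The total is $O(\log n)$, dominated by the $H$-partition. I expect the length argument to be the only step requiring genuine care, since it is where the crossing-edge versus within-set-edge bookkeeping and the acyclicity claim must be established jointly; the other three properties follow almost immediately from Lemmas \ref{hpart} and \ref{dcol}.
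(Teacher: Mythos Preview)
Your proposal is correct and follows essentially the same argument as the paper: the paper also analyzes the out-degree via Lemma~\ref{hpart}, bounds the deficit by the defect of the coloring in each $G(H_i)$, bounds the length by counting at most $O(\log n)$ crossing edges separated by within-set runs of length $O(t^2)$, and sums the three step-times to $O(\log n)$. Your explicit treatment of acyclicity (monotone $H$-index together with strictly increasing color within each $H_i$) is slightly more detailed than what the paper spells out, but the approach is identical.
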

\clearpage
\begin{fig}
A directed path $ P = (v_{11},v_{12},...,v_{\ell 4})$ with respect to the orientation produced by Algorithm 1. In this example each $H_i$ is colored with 4 colors. For all $i \in \{1,2,...,\ell\}$, $j \in \{1,2,3,4\}$, $v_{ij}$ is colored by $j$. P contains at most $\ell -1 = O(\log n)$ edges that cross between $H_i$'s.
\end{fig}
\includegraphics{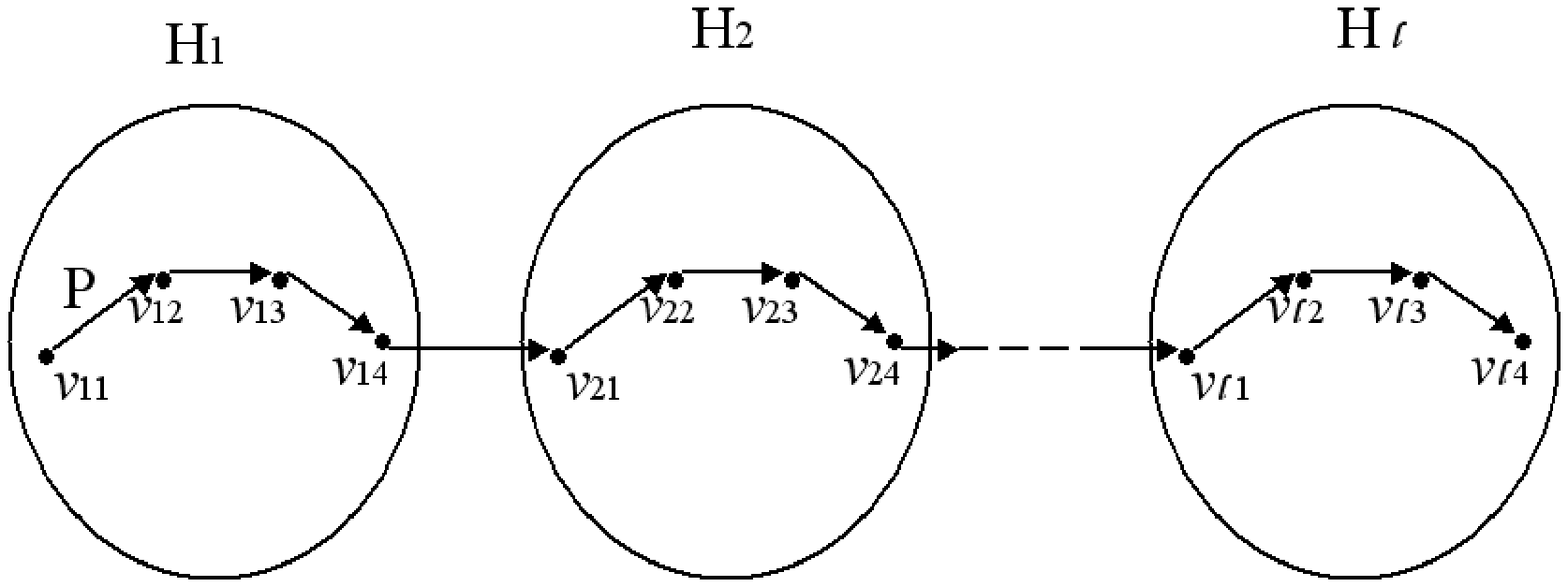}

We conclude this section with an efficient procedure for computing an arbdefective coloring. The procedure, called {\em Procedure Arbdefective-Coloring}, receives as input a graph $G$ and two positive integer parameters $k$ and $t$. First, it invokes Procedure Partial-Orientation on $G$ and $t$. Then it employs the produced orientation and the parameter $k$ as an input for Procedure Simple-Arbdefective, which is invoked once Procedure Partial-Orientation terminates.  This completes the description of the procedure. By Theorem \ref{simp-arb}, the procedure produces an $\left \lfloor a/t +  (2 +  \epsilon)  \cdot a/k \right \rfloor$-arbdefective $k$-coloring. The properties of Procedure Arbdefective-Coloring are summarized in the next corollary. The corollary follows directly from Theorems \ref{simp-arb} and \ref{part-orient}.
\begin{col} \label{arb}
Procedure Arbdefective-Coloring invoked on a graph $G$ and two positive integer parameters $k$ and $t$ computes an $\left \lfloor a/t +  (2 +  \epsilon)  \cdot a/k \right \rfloor$-arbdefective $k$-coloring in time $O(t^2 \log n)$.
\end{col}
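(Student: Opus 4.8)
The plan is to simply chain together the two results established immediately before, since Procedure Arbdefective-Coloring is nothing but the sequential composition of Procedure Partial-Orientation (with parameter $t$) followed by Procedure Simple-Arbdefective (with parameter $k$), the latter fed the orientation produced by the former. Thus the whole proof amounts to instantiating Theorem \ref{simp-arb} with the specific orientation whose parameters are guaranteed by Theorem \ref{part-orient}.

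First I would invoke Theorem \ref{part-orient} on $G$ and $t$. This yields, in $O(\log n)$ time, an acyclic partial orientation $\sigma$ of out-degree at most $m := \left\lfloor (2 + \epsilon) \cdot a \right\rfloor$, deficit at most $\tau := \left\lfloor a/t \right\rfloor$, and length $\ell := O(t^2 \log n)$. These are exactly the three orientation parameters that Theorem \ref{simp-arb} takes as its first input argument, so $\sigma$ is a legitimate input for Procedure Simple-Arbdefective.

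Next I would feed $\sigma$ together with the integer $k$ into Procedure Simple-Arbdefective and apply Theorem \ref{simp-arb}. That theorem produces a $\left(\tau + \left\lfloor m/k \right\rfloor\right)$-arbdefective $k$-coloring in $O(\ell)$ time. Substituting the values above, the arbdefect equals $\left\lfloor a/t \right\rfloor + \left\lfloor \left\lfloor (2 + \epsilon) a \right\rfloor / k \right\rfloor$, which is bounded by $\left\lfloor a/t + (2 + \epsilon) \cdot a/k \right\rfloor$ as claimed. For the running time, the $O(\log n)$ spent by Partial-Orientation is dominated by the $O(\ell) = O(t^2 \log n)$ spent by Simple-Arbdefective, giving a total of $O(t^2 \log n)$.

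I do not expect any genuine obstacle here, since all the real content has already been discharged by Theorems \ref{part-orient} and \ref{simp-arb}, and the corollary is a mechanical substitution. The only point requiring a touch of care is the bookkeeping of the nested floor operations in the arbdefect value; this contributes at most an $O(1)$ additive discrepancy, which is harmless because $\epsilon$ is an arbitrarily small positive constant and can be enlarged slightly to absorb it while remaining an arbitrarily small constant.
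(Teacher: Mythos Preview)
Your proposal is correct and matches the paper's own treatment exactly: the paper states that the corollary follows directly from Theorems \ref{simp-arb} and \ref{part-orient}, and your argument is precisely this mechanical substitution, with the added (but harmless) care about the nested floors.
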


We will invoke Procedure Arbdefective-Coloring with $t = k$. In this case it returns a $\left \lfloor (3 + \epsilon)\cdot a / t \right \rfloor$-arbdefective $t$-coloring in $O(t^2 \log n)$ time. Observe that this $t$-coloring can also be viewed as a decomposition of the original graph $G$ into $t$ subgraphs $G'_1,G'_2,...,G'_t$, each of arboricity at most $\left \lfloor (3 + \epsilon)\cdot a / t \right \rfloor$.

\section{Fast Legal Coloring}
In this section we employ the procedures presented in the previous section to devise efficient algorithms that produce legal colorings (i.e., colorings with no defect). Our algorithms rely on the following key properties of arbdefective coloring. Consider a $b$-arbdefective $k$-coloring for some positive integer parameters $b$ and $k$. For $1 \leq i \leq k$, let $G_i$ denote the subgraph induced by all vertices colored with the color $i$. For all $1 \leq i \leq k$, it holds that $a(G_i) \leq b$. Therefore, by Lemma \ref{arbcol}, each subgraph $G_i$ can be efficiently legally colored  using $\left \lfloor (2 + \epsilon) \cdot b \right \rfloor + 1$ colors. If each subgraph is assigned a distinct color palette of size $\left \lfloor (2 + \epsilon) \cdot b \right \rfloor + 1$, then the parallel legal coloring of all subgraphs results in a legal $O(b \cdot k)$-coloring of the entire graph $G$. Observe that once the arbdefective coloring is computed, each vertex $v \in G_i$ communicates only with its neighbors in the same subgraph $G_i$. Once the legal colorings of all subgraphs are computed, the color of $v$ is different not only from all its neighbors in $G_i$, but from all its neighbors in $G$, as we shortly prove.

Our goal is to efficiently compute an $O(a)$-coloring of the graph $G$. Therefore, we employ Corollary \ref{arb} with appropriate parameters to guarantee that $b \cdot k = O(a)$.
First, we present an $O(a)$-coloring algorithm with running time $O(a^{2/3} \log n)$ that involves a single invocation of Procedure Arbdefective-Coloring. Then, we show a more complex algorithm that achieves running time $O(a^{\mu} \log n)$ for an arbitrarily small positive constant $\mu$.

In our first algorithm we invoke Procedure Arbdefective-Coloring on a graph $G$ with the input parameters $k = t = \left \lceil a^{1/3} \right \rceil$. By Corollary \ref{arb}, as a result of this invocation we achieve a   $\left \lfloor  (3 +  \epsilon)  \cdot a^{2/3} \right \rfloor$-arbdefective $\left \lceil a^{1/3} \right \rceil$-coloring of $G$. For $1 \leq i \leq k$, let $G_i$ denote the subgraph induced by all vertices colored with the color $i$. The arboricity of $G_i$ is at most $(3 +  \epsilon)  \cdot a^{2/3}$. For $1 \leq i \leq k$, in parallel, color $G_i$ with $\gamma = \left\lfloor (2 + \epsilon)(3 +  \epsilon)  \cdot a^{2/3}\right \rfloor + 1$ colors. (See Lemma \ref {arbcol}). Let $\psi_i$, $1 \leq i \leq k$, denote the resulting colorings. For each index $i$, $\psi_i$ is a legal coloring of $G_i$. However, for a pair of neighboring vertices $v \in G_i$, $w \in G_j$, $i \neq j$, it may happen that $\psi_i(v) = \psi_j(w)$. Finally, each vertex $v \in G_i$ selects a new color $\varphi$ that is computed by  $\varphi(v) = (i - 1) \cdot \gamma + \psi_i(v)$. Intuitively, the color $\varphi(v)$ can be seen as an ordered pair $<i , \psi_i(v)>$. This completes the description of the algorithm. Its correctness and running time are summarized below.

\begin{lem}  \label{first-time}
$\varphi$ is a legal $O(a)$-coloring of $G$ computed in $O(a^{2/3} \log n)$ time.
\end{lem}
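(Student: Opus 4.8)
The plan is to verify both the correctness of the coloring $\varphi$ and the claimed running time, treating them separately.

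For correctness, I would argue that $\varphi$ is a legal coloring by considering an arbitrary edge $(v,w) \in E$ and showing $\varphi(v) \neq \varphi(w)$. There are two cases. First, suppose $v,w$ belong to the same subgraph $G_i$, i.e.\ they received the same arbdefective color $i$. Then since $\psi_i$ is a legal coloring of $G_i$ and $(v,w)$ is an edge of $G_i$, we have $\psi_i(v) \neq \psi_i(w)$; consequently $\varphi(v) = (i-1)\gamma + \psi_i(v) \neq (i-1)\gamma + \psi_i(w) = \varphi(w)$. Second, suppose $v \in G_i$ and $w \in G_j$ with $i \neq j$. Here I would use the fact that each $\psi_i$ takes values in $\{1,2,\ldots,\gamma\}$, so $\varphi(v) = (i-1)\gamma + \psi_i(v)$ lies in the block of integers $\{(i-1)\gamma + 1, \ldots, i\gamma\}$, and these blocks are disjoint for distinct indices. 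Hence $\varphi(v)$ and $\varphi(w)$ fall into different blocks and cannot be equal. The key observation is simply that the map $\varphi(v) = (i-1)\gamma + \psi_i(v)$ is injective on the disjoint palettes, so combining legal per-subgraph colorings with distinct palettes yields a global legal coloring.

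For the number of colors, I would note that $\varphi$ uses at most $k \cdot \gamma$ colors, where $k = \lceil a^{1/3} \rceil$ and $\gamma = \lfloor (2+\epsilon)(3+\epsilon) a^{2/3} \rfloor + 1 = O(a^{2/3})$. Thus the total is $O(a^{1/3} \cdot a^{2/3}) = O(a)$, as claimed.

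For the running time, I would decompose the algorithm into its three phases and bound each. The invocation of Procedure Arbdefective-Coloring with $t = k = \lceil a^{1/3} \rceil$ takes $O(t^2 \log n) = O(a^{2/3} \log n)$ time by Corollary~\ref{arb}. The parallel legal coloring of the subgraphs $G_i$, each of arboricity at most $(3+\epsilon) a^{2/3}$, takes $O(a^{2/3} \log n)$ time by part~(1) of Lemma~\ref{arbcol} (with $a$ replaced by the arboricity bound $O(a^{2/3})$ of each $G_i$); these run simultaneously across all $i$ and so incur only the cost of a single such coloring. The final relabeling step $\varphi(v) = (i-1)\gamma + \psi_i(v)$ requires no communication, only local computation, and hence $O(1)$ time. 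Summing, the dominant terms are both $O(a^{2/3}\log n)$, giving total running time $O(a^{2/3}\log n)$. I expect no serious obstacle here; the only point requiring a little care is that the subgraph colorings proceed in parallel with disjoint palettes, so that the running time is that of coloring one subgraph of arboricity $O(a^{2/3})$ rather than the sum over all $k$ subgraphs.
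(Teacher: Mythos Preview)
Your proposal is correct and follows essentially the same approach as the paper's own proof: the same two-case legality argument on an edge $(v,w)$ depending on whether the endpoints lie in the same $G_i$, the same count $k\cdot\gamma = O(a)$ for the number of colors, and the same three-phase running-time analysis invoking Corollary~\ref{arb} and Lemma~\ref{arbcol}. If anything, you spell out the disjoint-block argument for the case $i\neq j$ more explicitly than the paper does.
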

\begin{proof}
First, we prove that $\varphi$ is a legal $O(a)$-coloring. Observe that for each vertex $v$, it holds that $1 \leq \varphi(v) \leq k \cdot \gamma$. Since $k = \left \lceil a^{1/3} \right \rceil$, and $\gamma = \left \lfloor (2 + \epsilon)(3 +  \epsilon)  \cdot a^{2/3}\right \rfloor + 1$, it follows that $\varphi(v) = O(a)$, and consequently $\varphi$ is an $O(a)$-coloring. It is left to show that $\varphi$ is a legal coloring. Consider an edge $(u,v)$ in $G$, such that $u \in G_i$, $v \in G_j$. If $i = j$ then $\psi(u) \neq \psi(v)$ and hence also $\varphi(u) \neq \varphi(v)$. Otherwise $i \neq j$, and again $\varphi(u) \neq \varphi(v)$. 

Next, we prove that $\varphi$ is computed in $O(a^{2/3} \log n)$ time.
By Corollary \ref{arb}, the invocation of Procedure Arbdefective-Coloring requires $O(a^{2/3} \log n)$ time. It produces $k$ subgraphs $G_1,G_2,...,G_k$, each with arboricity at most  $\left \lfloor  (3 +  \epsilon)  \cdot a^{2/3} \right \rfloor$. By Lemma \ref{arbcol}, coloring all subgraphs $G_i$, for $1 \leq i \leq k$ in parallel, requires $O(a^{2/3} \log n)$ time as well. The computation of the final coloring $\varphi$ is performed locally, requiring no additional communication. Therefore, the overall running time is $O(a^{2/3} \log n)$.
\end{proof}

Lemma \ref{first-time} shows that this algorithm is already a significant improvement over the best previously known algorithm for $O(a)$-coloring, whose results are summarized in Lemma \ref{arbcol}. Nevertheless, the running time can be improved further by invoking Procedure Arbdefective-Coloring several times. Since Procedure Arbdefective-Coloring produces subgraphs of smaller arboricity comparing to the input graph, it can be invoked again on the subgraphs, producing a refined decomposition, in which each subgraph has even smaller arboricity. For example, invoking the procedure on a graph $G$ with the parameters $k = t = \left \lceil a^{1/6} \right \rceil$, results in an $O(a^{5/6})$-arbdefective $O(a^{1/6})$-coloring. Invoking the Procedure Arbdefective-Coloring with the same parameters again on all the $O(a^{1/6})$ subgraphs induced by the initial arbdefective coloring results in an $O(a^{4/6})$-arbdefective $O(a^{1/6})$-coloring of each subgraph. If distinct palettes are used for each subgraph, the entire graph is now colored with an $O(a^{2/3})$-arbdefective $O(a^{1/3})$-coloring. The running time of this computation is $O(a^{1/3} \log n)$. This computation is much faster than a single invocation of Procedure Arbdefective-Coloring with the parameters $k = t = \left \lceil a^{1/3} \right \rceil$ that yields the same results. However, to obtain a legal coloring of the original graph $G$, each subgraph still has to be colored legally. For the entire computation to be efficient, the arboricity of all subgraphs has to be as small as possible. Therefore we need to invoke Procedure Arbdefective-Coloring more times to achieve an $o(a^{2/3})$-arbdefective coloring. Indeed, applying Procedure Arbdefective-Coloring on each of the $O(a^{1/3})$ subgraphs produces an $O(\sqrt a)$-arbdefective $O(\sqrt a)$-coloring. This, in turn, directly gives rise to an $O(a)$-coloring within $O(\sqrt a \cdot \log n)$ time.

We employ this idea in the following Procedure called {\em Procedure Legal-Coloring}.

\begin{algorithm}[H]
\caption{Procedure Legal-Coloring($G, p$) }

\label{proced:legal}

\begin{algorithmic}[1] 

\STATE $G_1 := G$

\STATE $\alpha := a(G_1)$

\STATE $\cal {G}$ := $\{ G_1 \}$ \ \ \ \ \ \ \ /* The set of subgraphs */

\WHILE {$\alpha > p$}

    \STATE $\hat{\cal{G}} := \emptyset$ \ \ \ \ \ \ \ /* Temporary variable for storing refinements of $\cal {G}$ */
   
    \FOR {each $G_i \in \cal{G}$ in parallel}
       
        \STATE $G'_1,G'_2,...,G'_p$ := Arbdefective-Coloring($G_i$ , $k:=p$ , $t:=p$) \\        /* $G'_j$ is the subgraph induced by all the vertices that are assigned the color $j$ by the arbdefective coloring */
        
        \FOR {$j := 1,2,...,p$ \ in parallel }
        
           \STATE $z := (i-1) \cdot p + j$ \ \ \ \ \ \ \ \ /* Computing a unique index for each subgraph */
        
            \STATE $\hat{G}_z := G'_j$
            
            \STATE $\hat{\cal{G}} := \hat{\cal{G}}$ $ \cup \ \{ \hat{G}_z \}$
            
        \ENDFOR
     
    \ENDFOR
    
    \STATE $\cal{G}$ := $\hat{\cal{G}}$
     
    \STATE  $\alpha := \left \lfloor \alpha / p +  (2 +  \epsilon)  \cdot \alpha/p \right \rfloor$ \ /* The new upper bound for the arboricity of each of the subgraphs */
   
\ENDWHILE

\STATE $A := \left\lfloor (2 + \epsilon) \alpha \right \rfloor + 1$

\FOR {each $G_i \in \cal{G}$ in parallel}

   \STATE color $G_i$ legally using the palette $\{ (i-1) \cdot A + 1, (i-1) \cdot A + 2,...,i \cdot A \}$         \ \ \ \ \  /* Using Lemma \ref{arbcol} */
   
\ENDFOR

\end{algorithmic}
\end{algorithm}

The procedure receives as input a graph $G$ and a positive integer parameter $p$. It proceeds in phases. In the first phase Procedure Arbdefective-Coloring is invoked on the input graph $G$ with the parameters $k := p$ and $t := p$. Consequently, a decomposition into $p$ subgraphs is produced, in which each subgraph has arboricity $O(a/p)$. In each of the following phases Procedure Arbdefective-Coloring is invoked in parallel on all subgraphs in the decomposition that was created in the previous phase. As a result, a refinement of the decomposition is produced, i.e., each subgraph is partitioned into $p$ subgraphs of smaller arboricity. Consequently, after each phase, the number of subgraphs in $G$ grows by a factor of $p$, but the arboricity of each subgraph decreases by a factor of $\Theta(p)$. Hence, the product of the number of subgraphs and the arboricity of subgraphs remains $O(a)$ after each phase. (As long as the number of phases is constant.) Once the arboricities of all subgraphs become small enough, Lemma \ref{arbcol} is used for a fast parallel coloring of all the subgraphs, resulting in a unified legal $O(a)$-coloring of the input graph.

Let $\mu$ be an arbitrarily small positive constant. We show that invoking Procedure Legal-Coloring on $G$ with the input parameter $p := \left \lfloor a^{ \mu / 2} \right \rfloor$ results in an $O(a)$-coloring in $O(a^{\mu} \log n)$ time. The following lemma constitutes the proof of correctness of the algorithm.

We assume  without loss of generality that the arboricity $a$ is sufficiently large to guarantee that $p \geq 16$. (Otherwise, it holds that $a \leq 17^{2/\mu}$, i.e., the arboricity is bounded by a constant. In this case, by Lemma \ref{arbcol}, one can directly compute an $O(1)$-coloring in $O(\log n)$ time).

Let $\alpha_i$ and ${\cal G }_i$ denote the values of the variables $\alpha$ and $\cal{G}$, respectively, in the end of iteration $i$ of the while-loop of Algorithm \ref{proced:legal} (lines 4-16).

\begin{lem} \label{legal1} {\bf(1)} (Invariant for line 16 of Algorithm \ref{proced:legal})
In the end of iteration $i$ of the while-loop, $i = 1,2,...$, each graph in the collection ${\cal G}_i$ has arboricity at most $\alpha_i$. \\
{\bf (2)} The while-loop runs for a constant number of iterations.\\
{\bf (3)} For i =1,2,..., after $i$ iterations, it holds that $\alpha_i \cdot |{\cal G}_i| \leq (3 + \epsilon)^i \cdot a$.
\end{lem}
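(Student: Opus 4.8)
The plan is to establish the three parts in the order (1), (3), (2), since part (3) feeds directly into part (2). I set $\alpha_0 = a$ and ${\cal G}_0 = \{G\}$ for the values held just before the first iteration, and I repeatedly use Corollary \ref{arb} with $k = t = p$: applied to a graph of arboricity $a'$ it returns a $\left\lfloor (3+\epsilon)a'/p \right\rfloor$-arbdefective $p$-coloring, i.e.\ it partitions the graph into $p$ subgraphs each of arboricity at most $\left\lfloor (3+\epsilon)a'/p \right\rfloor$.

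For part (1) I would induct on $i$. The base case $i = 0$ holds because the single graph $G \in {\cal G}_0$ has arboricity $\alpha_0 = a$. For the inductive step, assume every graph of ${\cal G}_{i-1}$ has arboricity at most $\alpha_{i-1}$. Iteration $i$ invokes Procedure Arbdefective-Coloring with $k = t = p$ on each such graph $G_j$; by Corollary \ref{arb} every resulting subgraph has arboricity at most $\left\lfloor (3+\epsilon)a(G_j)/p \right\rfloor$, and since $a(G_j) \le \alpha_{i-1}$ and the floor is monotone, this is at most $\left\lfloor (3+\epsilon)\alpha_{i-1}/p \right\rfloor$, which is exactly the value $\alpha_i$ assigned to $\alpha$ in the update step of Algorithm \ref{proced:legal}. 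Hence every graph of ${\cal G}_i$ has arboricity at most $\alpha_i$.

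For part (3) I would first observe that each iteration replaces every graph of the current collection by exactly $p$ subgraphs, so $|{\cal G}_i| = p\,|{\cal G}_{i-1}| = p^i$. Dropping the floor in the update gives the recurrence $\alpha_i \le (3+\epsilon)\alpha_{i-1}/p$, which unfolds to $\alpha_i \le (3+\epsilon)^i a/p^i$. Multiplying the two estimates yields $\alpha_i \cdot |{\cal G}_i| \le (3+\epsilon)^i a$, as claimed.

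Part (2) is where I expect the real work to lie, and it is the place where the choice $p = \left\lfloor a^{\mu/2} \right\rfloor$ is used essentially. From part (3) and $|{\cal G}_i| = p^i$ we have $\alpha_i \le (3+\epsilon)^i a/p^i$, and the loop runs only while $\alpha > p$, so it suffices to produce a constant $i$ with $(3+\epsilon)^i a/p^i \le p$, equivalently $a \le p\,(p/(3+\epsilon))^i$. The crucial point --- and the reason a naive ``$\alpha$ shrinks by a constant factor per round'' argument fails, since it would give $\Theta(\log a)$ rounds --- is that each round divides $\alpha$ by the \emph{polynomially large} factor $p/(3+\epsilon) \ge a^{\mu/2}/\bigl(2(3+\epsilon)\bigr)$, where I used $p \ge 16 > 2(3+\epsilon)$. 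Taking logarithms, the target inequality holds once $i \ge (\log a - \log p)/\bigl(\log p - \log(3+\epsilon)\bigr)$; bounding the numerator by $\log a$ and, via $\log p \ge (\mu/2)\log a - 1$, the denominator below by $(\mu/4)\log a$ for all sufficiently large $a$, we see that $i = \lceil 4/\mu \rceil = O(1)$ suffices. Since $\mu$ is a fixed constant the loop halts after a constant number of iterations; the finitely many small values of $a$ are excluded by the standing assumption $p \ge 16$ (and are handled directly by Lemma \ref{arbcol}), so no separate boundary analysis is needed.
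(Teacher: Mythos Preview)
Your proposal is correct and follows essentially the same approach as the paper: induction on $i$ for part (1) via Corollary \ref{arb}, the observation $|{\cal G}_i|=p^i$ together with $\alpha_i \le (3+\epsilon)\alpha_{i-1}/p$ for part (3), and the ``$\alpha$ shrinks by the polynomial factor $p/(3+\epsilon)$ per round'' computation yielding the $4/\mu$ bound for part (2). The only cosmetic differences are that the paper proves the parts in the order (1),(2),(3) (its proof of (2) reads the shrinkage factor directly off line 15 rather than routing through (3)) and uses $i=1$ rather than $i=0$ as the induction base.
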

\begin{proof}
 \textbf{The proof of (1):}
The proof is by induction on the number of iterations. For the base case, observe that after the first iteration, $\cal{G}$ contains $p$ subgraphs produced by Procedure Arbdefective-Coloring. By Corollary \ref{arb}, the arboricity of each subgraph is at most 
$\left \lfloor a/t +  (2 +  \epsilon)  \cdot a/k \right \rfloor = \left \lfloor a/p +  (2 +  \epsilon)  \cdot a/p \right \rfloor = \alpha_1.$

For the inductive step, consider an iteration $i$. By the induction hypothesis, each subgraph in ${\cal G}_{i-1}$ has arboricity at most $\alpha_{i-1}$. During iteration $i$, Procedure Arbdefective-Coloring is invoked on all subgraphs in ${\cal G}_{i-1}$. Consequently, ${\cal G}_i$ contains new subgraphs, each with arboricity at most $\left \lfloor \alpha_{i-1}/p +  (2 +  \epsilon)  \cdot \alpha_{i-1}/p \right \rfloor$, which is exactly the value $\alpha_i$ of $\alpha$ in the end of iteration $i$. (See line 15.) 

 \textbf{The proof of (2):}
In each iteration the variable $\alpha$ is decreased by a factor of at least $b = p / (3 + \epsilon) $. Hence, the number of iterations is at most $\log_b a$. For any $0 < \epsilon < 1/2$, and a sufficiently large $a$, it holds that
$$ \log_b a = \frac{\log a} {\log (p / (3 + \epsilon))} \leq  \frac{\log a}{\log (\frac{1}{4} a^{\mu/2})} =  \frac{2/\mu \cdot \log a^{\mu/2}}{\log a^{\mu/2} - 2} \leq 4/\mu. $$

 \textbf{The proof of (3):}
The correctness of the lemma follows directly from the fact that in each iteration the number $|{\cal{G}}|$ of subgraphs grows by a factor of $p$, and the arboricity of each subgraph decreases by a factor of at least $ p / (3 + \epsilon)$.
\end{proof}
The next theorem follows from Lemma \ref{legal1}.
\begin{thm} \label {sumlegal}
Invoking Procedure Legal-Coloring on a graph $G$ with arboricity $a$ with the parameter $p = \left\lceil a^{\mu/2} \right \rceil$ for a positive constant $\mu < 1$, computes a legal $O(a)$-coloring of $G$ within $O(a^{\mu} \cdot \log n)$ time.
\end{thm}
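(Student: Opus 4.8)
The plan is to derive Theorem~\ref{sumlegal} directly from Lemma~\ref{legal1}, combined with Corollary~\ref{arb} (which bounds the cost of each arbdefective refinement) and Lemma~\ref{arbcol}(1) (which legally colors the final low-arboricity subgraphs). Let $I$ denote the number of iterations the while-loop performs, and let ${\cal G}_I$ and $\alpha_I$ be the final collection of subgraphs and the final value of the arboricity bound. I would prove the two assertions of the theorem separately: that the output $\varphi$ is a legal coloring with $O(a)$ colors, and that the whole computation terminates in $O(a^\mu \log n)$ time.

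For correctness, I would first observe that each iteration refines the collection by partitioning every subgraph into $p$ vertex-disjoint parts, so ${\cal G}_I$ is a partition of $V$. By Lemma~\ref{legal1}(1) every $G_i \in {\cal G}_I$ has arboricity at most $\alpha_I$, and since the loop has exited, $\alpha_I \leq p$. Each such $G_i$ is colored legally using a palette of size $A = \lfloor (2+\epsilon)\alpha_I \rfloor + 1$, and the palettes assigned to distinct subgraphs are disjoint (line~21), a coloring guaranteed by Lemma~\ref{arbcol}(1). To see legality, take an edge $(u,v)$: if $u,v$ lie in the same part the legal coloring of that part separates them, while if they lie in different parts the disjoint palettes force $\varphi(u) \neq \varphi(v)$. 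For the color count, the total number of colors is $|{\cal G}_I| \cdot A = O(|{\cal G}_I| \cdot \alpha_I)$, and Lemma~\ref{legal1}(3) bounds this by $O((3+\epsilon)^I \cdot a)$; since $I$ is constant by Lemma~\ref{legal1}(2), this is $O(a)$.

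For the running time, each iteration invokes Procedure Arbdefective-Coloring with $t = p$ on the subgraphs of the previous collection. These subgraphs are vertex-disjoint, so the invocations proceed in parallel, each costing $O(p^2 \log n)$ time by Corollary~\ref{arb}. By Lemma~\ref{legal1}(2) there are only $O(1)$ iterations, so the while-loop contributes $O(p^2 \log n)$ in total. The final coloring step (lines~18--22) colors subgraphs of arboricity at most $\alpha_I \leq p$, which by Lemma~\ref{arbcol}(1) takes $O(p \log n)$ time in parallel. Summing and substituting $p = \lceil a^{\mu/2} \rceil$, so that $p^2 = O(a^\mu)$, yields the claimed $O(a^\mu \log n)$ bound.

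The main obstacle, and precisely the reason Lemma~\ref{legal1} is phrased as it is, is controlling the blow-up across iterations: each refinement multiplies the number of subgraphs by $p$ while only dividing the arboricity by $\Theta(p)$, injecting a spurious factor of $(3+\epsilon)$ per round into the product $|{\cal G}_i| \cdot \alpha_i$. The whole argument therefore hinges on part~(2): because $\alpha$ shrinks by a factor of at least $p/(3+\epsilon) \geq 4$ each round, the loop terminates after a constant number $I \leq 4/\mu$ of iterations, so $(3+\epsilon)^I = O(1)$ and both the color count and the running time stay under control. One easy-to-overlook detail I would check explicitly is that $\alpha_I \geq 1$, which is needed to absorb the additive ``$+1$'' in $A$ and conclude $A = O(\alpha_I)$; this holds because the loop runs only while $\alpha > p \geq 16$, so $\alpha_I$ is a positive integer.
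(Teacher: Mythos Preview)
Your proposal is correct and follows essentially the same route as the paper's own proof: legality via disjoint palettes and the legal coloring of each final subgraph, the $O(a)$ color bound via Lemma~\ref{legal1}(3) together with the constant iteration count from Lemma~\ref{legal1}(2), and the running time via a constant number of $O(p^2\log n)$ refinement rounds (Corollary~\ref{arb}) plus a final $O(p\log n)$ coloring step (Lemma~\ref{arbcol}). Your explicit check that $\alpha_I\geq 1$ (so that $A=O(\alpha_I)$) is a nice bit of care that the paper leaves implicit; your line-number references are slightly off relative to Algorithm~\ref{proced:legal}, but this is cosmetic.
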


\begin{proof}
We first prove that the coloring is legal. Observe that the selection of unique indices in line 9 guarantees that any two distinct subgraphs that were added to the same set $\hat{\cal G}$ are colored using distinct palettes. In addition, in each iteration each vertex belongs to exactly one subgraph in ${\cal G}$. Consequently, once the while-loop terminates, each vertex $v$ belongs to exactly one subgraph in $\cal {G}$. Let $G_i \in {\cal{G}}$ be the subgraph that contains $v$. Let $\alpha'$ denote the value of $\alpha$ on line 17 of Algorithm \ref{proced:legal}. As we have seen, the arboricity of $G_i$ is at most $\alpha'$. Hence, $G_i$ is colored legally using a unique palette containing $A = \left \lfloor (2 + \epsilon) \alpha' + 1 \right \rfloor$ colors. Consequently, the color of $v$ is different from the colors of all its neighbors, not only in $G_i$, but in the entire graph $G$.\

Now we analyze the number of colors in the coloring. By Lemma \ref{legal1}, the number of colors employed is $A \cdot |{\cal G}| = (\left\lfloor (2 + \epsilon) \alpha' \right \rfloor + 1) \cdot |{\cal G}| \leq (3 + \epsilon)^c \cdot a$, for some explicit constant $c$. (For a sufficiently large $a$, the appropriate constant is $c = 4/ \mu + 1$.) Hence, the number of employed colors is $O(a)$. 

Next, we analyze the running time of Procedure Legal-Coloring.
By Lemma \ref{legal1}(2), during the execution of Procedure Legal-Coloring, the Procedure Arbdefective-Coloring is invoked for a constant number of times. Note also that each time it is invoked with the same values of the parameters $t = k = p = \left \lfloor a^{\mu /2} \right \rfloor$. Hence, by Corollary \ref{arb}, executing the while-loop requires $O(t^2 \log n)$ = $O(a^{\mu } \log n)$ time. By Lemma \ref{arbcol}, the additional time required for coloring all the subgraphs in step 19 of Algorithm \ref{proced:legal} is $O(p \log n) = O(a^{\mu/2} \log n)$. (By the termination condition of the while-loop (line 4), once the algorithm reaches line 19, it holds that $\alpha \leq p$.) Therefore, the total running time is $O(a^{\mu} \log n)$.
\end{proof}

Theorem \ref{sumlegal} implies that for the family of graphs with polylogarithmic arboricity in $n$, an $O(a)$-coloring can be computed in time $O((\log n)^{1 + \mu'})$, for an arbitrarily small positive constant $\mu'$. In the case of graphs with superlogarithmic arboricity, we can achieve even better results than those that are given in Theorem \ref{sumlegal}. In this case we execute Procedure Legal-Coloring with the parameter $p = \left \lfloor \frac{a^{\mu'}}{\log n} \right \rfloor$. Since $a$ is superlogarithmic in $n$, and $\mu' > 0 $ is a constant, it holds that $p > a^{\mu' /2}$, for a sufficiently large $n$. Therefore, Procedure Legal-Coloring executes its loop a constant number of times. Consequently, the number of colors employed is still $O(a)$. The running time is the sum of running time of Procedure Arbdefective-Color and the running time of computing legal colorings of graphs of arboricity at most $p$, which is $O(\frac{a^{2 \mu'}}{\log^2 n} \cdot \log n + \frac{a^{\mu'}}{\log n}\cdot {\log n}) = O(a^{2\mu'}).$ If we set $\mu' =  \mu /2$, the running time becomes $O(a^{\mu})$. We summarize this result in the following corollary.

\begin{col}
Let $\mu$ be an arbitrarily small constant. For any graph $G$, a legal $O(a)$-coloring of $G$ can be computed in time $O(a^\mu + (\log n)^{1 + \mu})$.
\end{col}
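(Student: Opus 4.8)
The plan is to prove the two-term bound by splitting into two regimes according to how the arboricity $a$ compares with a fixed polylogarithmic threshold in $n$, and in each regime to invoke Procedure Legal-Coloring (Algorithm \ref{proced:legal}) with an appropriately chosen value of the parameter $p$. Fix the target constant $\mu$ and set $\mu' = \mu/2$, and put the threshold at $T = (\log n)^{2/\mu'} = (\log n)^{4/\mu}$. The whole point of the ``$+$'' in $O(a^\mu + (\log n)^{1+\mu})$ is that it lets the larger of the two terms absorb the cost in each regime: the first term will dominate when $a \ge T$, and the second when $a < T$. In both regimes I must guarantee (i) that the while-loop of Algorithm \ref{proced:legal} runs for only a constant number of iterations, so that by Lemma \ref{legal1}(3) the total palette has size $(3+\epsilon)^{O(1)}\cdot a = O(a)$, and (ii) the claimed running time, reading the loop-body cost off Corollary \ref{arb} and the final legal-coloring cost off Lemma \ref{arbcol}.

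In the superlogarithmic regime $a \ge T$ I would run Procedure Legal-Coloring with $p = \left\lfloor a^{\mu'}/\log n \right\rfloor$, exactly as in the discussion preceding the statement. The case hypothesis gives $a^{\mu'/2} \ge \log n$, hence $p \ge a^{\mu'/2} = a^{\Omega(1)}$, which forces $\log_b a$ to be constant (with $b = p/(3+\epsilon)$); thus there are $O(1)$ iterations and, by Lemma \ref{legal1}, the coloring uses $O(a)$ colors. By Corollary \ref{arb} each loop body costs $O(p^2 \log n) = O(a^{2\mu'}/\log n)$, and by Lemma \ref{arbcol} the final legal coloring of the arboricity-$\le p$ subgraphs costs $O(p \log n) = O(a^{\mu'})$. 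Summing over the constantly many iterations yields $O(a^{2\mu'}) = O(a^\mu)$, which is absorbed by the $a^\mu$ term.

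In the polylogarithmic regime $a < T$ I would instead apply Theorem \ref{sumlegal} as a black box, but with the \emph{smaller} fixed constant $\nu = \mu^2/8$ in place of $\mu$ (running Legal-Coloring with $p = \lceil a^{\nu/2}\rceil$). For any fixed $\nu$, Theorem \ref{sumlegal} already guarantees $O(1)$ iterations and an $O(a)$-coloring in time $O(a^{\nu}\log n)$. The case hypothesis $a < (\log n)^{4/\mu}$ then gives $a^{\nu} < (\log n)^{(4/\mu)\cdot(\mu^2/8)} = (\log n)^{\mu/2}$, so the running time is $O\!\left((\log n)^{\mu/2}\cdot \log n\right) = O((\log n)^{1+\mu/2}) = O((\log n)^{1+\mu})$, absorbed by the second term. (The boundary sub-case where $a$ is an absolute constant, so that $p < 16$, is handled as noted before Lemma \ref{legal1}: Lemma \ref{arbcol} directly produces an $O(1)$-coloring in $O(\log n)$ time.) Combining the two regimes gives the stated bound $O(a^\mu + (\log n)^{1+\mu})$.

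I expect the main obstacle to lie in the polylogarithmic regime rather than the (already-sketched) superlogarithmic one. The delicacy is that the $\log n$ factor inherited from Corollary \ref{arb} and Lemma \ref{arbcol} must be beaten down so that the exponent on $\log n$ in the final bound is exactly $1+\mu$, and this cannot be arranged by simply reusing Theorem \ref{sumlegal} with parameter $\mu$ (that would give $a^\mu \log n$, whose $\log n$-exponent is too large for polylogarithmic $a$). The fix is that the case hypothesis $a < (\log n)^{4/\mu}$ converts $\log a = O(\log\log n)$ into a constant iteration count even for a parameter chosen \emph{independently} of the unknown exponent of the polylogarithm; choosing $\nu$ small enough (here $\nu = \mu^2/8$) so that $a^{\nu} = O((\log n)^{\mu/2})$ across the whole range $a < T$ is the point that requires care. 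Once this parameter bookkeeping is in place, the remaining verification is a direct substitution into Corollary \ref{arb}, Lemma \ref{arbcol}, and Lemma \ref{legal1}.
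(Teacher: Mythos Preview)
Your proposal is correct and follows essentially the same two-regime argument as the paper: in the superlogarithmic regime you run Procedure Legal-Coloring with $p=\lfloor a^{\mu'}/\log n\rfloor$ for $\mu'=\mu/2$, exactly as the paper does, and in the polylogarithmic regime you invoke Theorem \ref{sumlegal} with a smaller constant, which is precisely what the paper means when it says ``Theorem \ref{sumlegal} implies that for the family of graphs with polylogarithmic arboricity in $n$, an $O(a)$-coloring can be computed in time $O((\log n)^{1+\mu'})$.'' The only difference is cosmetic: you make the threshold $T=(\log n)^{4/\mu}$ and the auxiliary constant $\nu=\mu^2/8$ explicit, whereas the paper leaves both implicit (indeed, the paper's requirement ``$p>a^{\mu'/2}$'' unwinds to exactly your threshold $a\ge(\log n)^{2/\mu'}=(\log n)^{4/\mu}$).
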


Next, we demonstrate that one can trade the number of colors for time. Specifically, we show that if one is allowed to use slightly more than $O(a)$ colors, the running time can be bounded by $polylog(n)$, {\em for all values of $a$}. To this end, we select the parameter $p$ to be polylogarithmic in $a$. With this value of $p$ the running time $O(p \log n)$ of the coloring step in line 19 of Algorithm \ref{proced:legal} becomes polylogarithmic. Moreover, setting the parameters $t$ and $k$ to be polylogarithmic in $a$ results in a polylogarithmic running time of Procedure Arbdefective-Coloring. The number of executions of an iteration of the while-loop is $O(\log_p a)$. Consequently, the total running time is also polylogarithmic. However, the number of iterations becomes superconstant. Hence the number of colors grows beyond $O(a)$. The specific parameters we select are $p = k = t = f(a)^{1/2}$, for an arbitrarily slow-growing function $f(a) = \omega(1)$. The results of invoking Procedure Legal-Coloring with these parameters are given below.

\begin{thm} \label{pol}
Invoking Procedure Legal-Coloring with the parameter $p = f(a)^{1/2}$, $f(a) = \omega(1)$ as above, requires  $O(f(a) \log a \log n)$ time. The resulting coloring employs $a^{1 + o(1)}$ colors.
\end{thm}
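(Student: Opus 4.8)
The plan is to establish the time bound and the color bound separately, reusing parts (1) and (3) of Lemma~\ref{legal1} (whose proofs only invoke Corollary~\ref{arb} with $k=t=p$ and therefore hold verbatim for any value of $p$), while re-deriving the number of while-loop iterations, which is no longer constant once $p=f(a)^{1/2}$.

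For the running time, I would first note that each iteration of the while-loop invokes Procedure Arbdefective-Coloring in parallel on all current subgraphs with $t=k=p$; by Corollary~\ref{arb} this costs $O(t^2\log n)=O(p^2\log n)=O(f(a)\log n)$ per iteration, independently of the subgraphs' arboricities. Next, as in the proof of Lemma~\ref{legal1}(2), $\alpha$ shrinks by a factor of at least $b=p/(3+\epsilon)$ per iteration, so the iteration count is at most $\log_b a=O(\log a/\log p)=O(\log a/\log f(a))$; here I use that $f(a)=\omega(1)$ guarantees $\frac{1}{2}\log f(a)$ eventually dominates $\log(3+\epsilon)$, keeping $b$ bounded away from $1$. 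Multiplying, the while-loop runs in $O\!\left(\frac{\log a}{\log f(a)}\cdot f(a)\log n\right)=O(f(a)\log a\log n)$ time (using $\log f(a)\ge1$), and the final coloring step on line~19 colors subgraphs of arboricity at most $p$ via Lemma~\ref{arbcol} in only $O(p\log n)=O(f(a)^{1/2}\log n)$ time, which is subdominant. This yields the claimed time bound.

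For the number of colors, let $I$ be the total iteration count and $\alpha'=\alpha_I$ the terminal arboricity bound. I would observe that the final palette has size $A=\lfloor(2+\epsilon)\alpha'\rfloor+1\le(3+\epsilon)\alpha'$ and that distinct palettes are used across the $|{\cal G}_I|$ subgraphs, so the color count is $A\cdot|{\cal G}_I|\le(3+\epsilon)\,\alpha'\,|{\cal G}_I|$; then Lemma~\ref{legal1}(3) gives $\alpha'\,|{\cal G}_I|\le(3+\epsilon)^{I}a$, so the count is at most $(3+\epsilon)^{I+1}a$.

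The hard part will be verifying that $(3+\epsilon)^{I+1}=a^{o(1)}$, and this is where the argument must be done with care. Taking logarithms, $\log\!\left((3+\epsilon)^{I+1}\right)=(I+1)\log(3+\epsilon)=O(\log a/\log f(a))$, so $(3+\epsilon)^{I+1}=a^{O(1/\log f(a))}$; since $f(a)=\omega(1)$ forces $\log f(a)\to\infty$, the exponent is $o(1)$ and hence $(3+\epsilon)^{I+1}=a^{o(1)}$, giving a total of $a\cdot a^{o(1)}=a^{1+o(1)}$ colors. As in the earlier results, I would dispatch the degenerate regime by assuming $a$ is large enough that $p\ge16$, and handling bounded $a$ directly through Lemma~\ref{arbcol}.
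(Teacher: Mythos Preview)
Your proposal is correct and follows essentially the same approach as the paper: bound the iteration count by $\log_b a = O(\log a/\log f(a))$ with $b=p/(3+\epsilon)$, multiply by the $O(p^2\log n)=O(f(a)\log n)$ cost per iteration to get the time bound, and then use Lemma~\ref{legal1}(3) to bound the colors by $(3+\epsilon)^{O(\log a/\log f(a))}\cdot a = a^{1+O(1/\log f(a))}=a^{1+o(1)}$. Your exposition is in fact slightly more careful than the paper's (you explicitly justify that parts~(1) and~(3) of Lemma~\ref{legal1} hold for arbitrary~$p$, account for the line-19 coloring step, and dispatch the degenerate small-$a$ regime), but the mathematical content is identical.
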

\begin{proof}
Set $b = p / (3 + \epsilon)$. The number of iterations is at most $\log_b a = O(\frac{\log a}{\log f(a)})$. 
Each iteration requires $O(p^2 \log n) = O(f(a) \log n)$ time. Hence the running time is \ $\log_b a \cdot O(f(a) \log n) = O(f(a) \log a \log n)$.
By Lemma \ref{legal1}(3), the total number of employed colors is at most  $a \cdot (3 + \epsilon)^{O(\log a / \log f(a))}$ 

\noindent $ = a ^{1 + O(1 / \log f(a))} = a^{1 + o(1)}.$
\end{proof}

More generally, as evident from the above analysis, the running time of our algorithm is $O(p^2 \log_p a \log n)$, and the number of colors used is $2^{O(\log_p a)} \cdot a $. Another noticeable point on the tradeoff curve is on the opposite end of the spectrum, i.e., $p = C$, for some large constant $C$. Here the tradeoff gives rise to $a^{1 + O(1/ \log C)}$-coloring in $O(\log a \log n)$ time.

\begin{col} \label{polylog}
For an arbitrarily small constant $\eta > 0$, Procedure Legal-Coloring invoked with $p = 2^{O(1/ \eta)}$ produces an $O(a^{1 + \eta})$-coloring in $O(\log a \log n)$ time.
\end{col}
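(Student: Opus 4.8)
The plan is to specialize the general tradeoff already extracted from the analysis of Procedure Legal-Coloring, rather than to reprove anything from scratch. Legality of the output is established exactly as in the proof of Theorem~\ref{sumlegal} and does not depend on the value of $p$: the unique indices computed on line~9 assign disjoint palettes to distinct subgraphs produced in each refinement, every vertex ends up in exactly one subgraph of the final collection ${\cal G}$, and each such subgraph has arboricity at most the terminal value $\alpha'\le p$ and is legally colored on line~19 via Lemma~\ref{arbcol}. Hence it remains only to bound the number of colors and the running time when $p$ is the constant $2^{O(1/\eta)}$, and the whole argument is a calibration of parameters in the bounds already proved.

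For the number of colors I would invoke Lemma~\ref{legal1}. Its proof of part~(2), together with the termination condition of the while-loop, shows that the loop runs for $I=\log_b a=O(\log_p a)=O(\log a/\log p)$ iterations, where $b=p/(3+\epsilon)$ (here we use that $p$ is large, so $\log(p/(3+\epsilon))=\Theta(\log p)$), and part~(3) gives $\alpha'\cdot|{\cal G}|\le (3+\epsilon)^I\cdot a$ at termination. Since line~19 uses $A=\lfloor(2+\epsilon)\alpha'\rfloor+1=O(\alpha')$ colors per subgraph and a disjoint palette for each of the $|{\cal G}|$ subgraphs, the total is $A\cdot|{\cal G}|=O\bigl((3+\epsilon)^I\cdot a\bigr)=2^{O(\log_p a)}\cdot a=a^{\,1+O(1/\log p)}$.

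For the running time, each iteration of the while-loop is a single invocation of Procedure Arbdefective-Coloring with $t=k=p$, which by Corollary~\ref{arb} costs $O(t^2\log n)=O(p^2\log n)$; over $O(\log_p a)$ iterations this is $O(p^2\log_p a\log n)$. The concluding coloring on line~19 colors subgraphs of arboricity at most $p$ and, by Lemma~\ref{arbcol}(1), costs $O(p\log n)$. Thus the overall running time is $O(p^2\log_p a\log n)$.

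Finally I would choose $p=2^{c/\eta}$ for a sufficiently large absolute constant $c$, so that $\log p=\Theta(1/\eta)$ and the exponent $O(1/\log p)$ in the color bound is at most $\eta$, yielding $O(a^{1+\eta})$ colors. With this constant $p$, the factors $p^2$ and $1/\log p$ are absorbed into the asymptotic notation, so $O(p^2\log_p a\log n)=O(\log a\log n)$, as claimed. The only real subtlety — essentially bookkeeping — is this calibration: the constant $c$ in $p=2^{c/\eta}$ must be taken large enough to dominate the implicit constant hidden in the $O(1/\log p)$ exponent of the color count; once that is fixed, both conclusions follow immediately from the established bounds.
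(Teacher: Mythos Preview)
Your proposal is correct and follows essentially the same route as the paper: the corollary is obtained by specializing the general tradeoff (running time $O(p^2\log_p a\log n)$, number of colors $2^{O(\log_p a)}\cdot a = a^{1+O(1/\log p)}$) derived from Lemma~\ref{legal1} and Corollary~\ref{arb} to a large constant $p=2^{O(1/\eta)}$. Your write-up merely makes explicit the calibration of constants and the appeal to the legality argument of Theorem~\ref{sumlegal}, which the paper leaves implicit.
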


Corollary \ref{polylog} implies that any graph $G$ for which there exists a constant $\nu > 0$ such that $a \leq \Delta^{1 - \nu}$ can be colored with $o(\Delta)$ colors in $O(\log a \log n)$ time. This goal is achieved by computing an $O(a^{1 + \nu})$-coloring of the input graph $G$. Since 
$a^{1 + \nu} \le \Delta^{1 - \nu^2}$, this is an $o(\Delta)$-coloring of G.
Therefore, our results give rise to determinstic polylogarithmic $(\Delta + 1)$-coloring algorithm for a very wide family of graphs. This fact is summarized in the following corollary.

\begin{col}
For the family of graphs with arboricity $a \leq \Delta^{1 - \nu}$, for an arbitrarily small constant $\nu$, one can compute $(\Delta + 1)$-coloring in $O(\log a \log n)$ time.
\end{col}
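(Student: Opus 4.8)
The plan is to apply Corollary \ref{polylog} directly with the choice $\eta = \nu$, and then to observe that under the polynomial-separation hypothesis $a \le \Delta^{1-\nu}$ the resulting number of colors is automatically below $\Delta + 1$. Thus no new algorithm is needed: the entire statement reduces to a color-counting estimate applied to the coloring of Corollary \ref{polylog}.

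First I would invoke Procedure Legal-Coloring with the parameter $p = 2^{O(1/\nu)}$. By Corollary \ref{polylog} this produces a legal $O(a^{1+\nu})$-coloring of $G$ in $O(\log a \log n)$ time. Since this already matches the claimed running time exactly, the only remaining task is to bound the number of colors employed.

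Next I would carry out the color-count estimate. Substituting the hypothesis $a \le \Delta^{1-\nu}$ into the bound $O(a^{1+\nu})$ gives
$$ a^{1+\nu} \le \bigl(\Delta^{1-\nu}\bigr)^{1+\nu} = \Delta^{(1-\nu)(1+\nu)} = \Delta^{1-\nu^2}. $$
Because $\nu > 0$ is a fixed constant, we have $1 - \nu^2 < 1$, so $a^{1+\nu} = O\bigl(\Delta^{1-\nu^2}\bigr) = o(\Delta)$. Hence the coloring produced uses $o(\Delta)$ colors, and in particular the leading constant hidden in the $O(\cdot)$ notation is absorbed for all sufficiently large $\Delta$, so the number of colors is at most $\Delta + 1$ (indeed, strictly fewer). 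A legal coloring using at most $\Delta + 1$ colors is by definition a $(\Delta + 1)$-coloring, which settles this regime and justifies the footnote that the coloring is in fact an $o(\Delta)$-coloring.

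The only place requiring any care is the boundary range of small $\Delta$, where the asymptotic inequality $o(\Delta) \le \Delta + 1$ need not yet hold. I would dispatch this by noting that $a \le \Delta$ always, so bounded $\Delta$ forces $a$ to be bounded as well; in that case Lemma \ref{arbcol}(1) yields a legal $O(a)$-coloring --- hence a fortiori a $(\Delta + 1)$-coloring --- in $O(a \log n) = O(\log n)$ time, which lies within the claimed bound. Consequently the main obstacle is not a deep one: it is simply the verification that the polynomial gap $a \le \Delta^{1-\nu}$ pushes $a^{1+\nu}$ strictly below $\Delta$ via the exponent $(1-\nu)(1+\nu) = 1 - \nu^2 < 1$, together with the routine bookkeeping to absorb the leading constant and to treat bounded $\Delta$ as a degenerate case.
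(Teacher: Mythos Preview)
Your proposal is correct and follows essentially the same argument as the paper: apply Corollary~\ref{polylog} with $\eta = \nu$ to obtain an $O(a^{1+\nu})$-coloring in $O(\log a \log n)$ time, and then use $a^{1+\nu} \le \Delta^{(1-\nu)(1+\nu)} = \Delta^{1-\nu^2} = o(\Delta)$ to conclude that this is a $(\Delta+1)$-coloring. The only difference is that you add an explicit treatment of the bounded-$\Delta$ regime via Lemma~\ref{arbcol}(1), whereas the paper leaves this implicit in the asymptotic statement.
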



 
\section{Even Faster Coloring}

In this section we show that one can decrease the running time of the coloring procedure almost all the way to $\log n$, at the expense of increasing the number of colors. (The number of colors still stays $o(a^2)$, but it grows significantly beyond $a^{1 + \eta}$.) In addition, we show that for any $t$, $1\leq t \leq a$, and any constant $\epsilon > 0$, one can compute $O(a \cdot t)$-coloring in $O((\frac{a}{t})^{\epsilon} \cdot \log n)$ time.

We start with extending an algorithm from \cite{K09} to graphs of bounded arboricity. Specifically, Kuhn \cite{K09} devised an algorithm that works on an $n$-vertex graph $G$ of maximum degree $\Delta$, and for an integer parameter $t$, $1 \leq t \leq \Delta$, it constructs an $O(t^2)$-coloring, $(\Delta/t)$-defective in $O(\log^* n)$ time. (His technique is based on that of Linial \cite{L92}.) We show that if a graph $G$ has arboricity at most $a$ then an $(a/t)$-arbdefective $O(t^2)$-coloring can be computed in $O(\log n)$ time.

The first step of our algorithm is to construct an orientation $\sigma$ of out-degree at most $A$, $A = (2 + \epsilon) \cdot a$, for some constant $\epsilon > 0$. To this end we employ an algorithm from \cite{BE08}. This algorithm requires $O(\log n)$ time, and it is the most time-consuming step of the algorithm. The second step uses this orientation to execute an algorithm that is analogous to the one of \cite{K09}.

Next, we describe this algorithm. Set $d = \Delta/t$. Suppose that we start with a $d'$-arbdefective $M$-coloring of $G$, for some possibly very large $M$, and $0 \leq d' \leq d$. Consider a pair of sets ${\cal A}$ and ${\cal B}$ that will be determined later, and let $F({\cal A}, {\cal B})$ denote the collection of all functions from ${\cal A}$ to ${\cal B}$. Consider also a mapping $\Psi:[M]\rightarrow F({\cal A}, {\cal B})$ that associates a function $\varphi_{\chi} \in F({\cal A}, {\cal B})$ (i.e, $\varphi_{\chi}:{\cal A} \rightarrow {\cal B})$ with each color $\chi \in [M]$. Our coloring algorithm is based on a recoloring subroutine, Procedure Arb-Recolor. This procedure is described in Algorithm \ref{proced:arbrec}. The procedure accepts as input the original color $\chi \in [M]$ of the vertex $v$ that executes the procedure, the $\delta \leq A$ colors of the parents of $v$ according to the orientation computed on the first step of the algorithm, and the defect parameter $d$.

\begin{algorithm}[H]
\caption{Procedure Arb-Recolor}

\label{proced:arbrec}

\textbf{Input:} A color $\chi \in [M]$, parent colors $y_1,y_2,...,y_{\delta} \in [M]$, parameter $d$.

\begin{algorithmic}[1]

\STATE find $\alpha \in {\cal A}$ such that \ \ \  $|\{ i \in [\delta]: \varphi_{\chi}(\alpha) = \varphi_{y_i}(\alpha)\}| \leq d.$ \ \ \  (*)

\STATE return (color := $(\alpha, \varphi_{\chi}(\alpha))$).

\end{algorithmic}
\end{algorithm}

The following lemma (analogous to Lemma 4.1 in \cite{K09}) summarizes the properties of Procedure Arb-Recolor. (Its proof is analogous to that of Lemma 4.1 from \cite{K09}, and it is provided in Appendix B for the sake of completeness.)

\begin{lem} \label {arbkuhn}
For a value $k > 0$, suppose that the functions $\{\varphi_x : x \in [M] \}$ satisfy that for any two distinct colors $x,y \in [M]$, there are at most $k$ values $\alpha \in {\cal A}$ for which $\varphi_x(\alpha) = \varphi_y(\alpha)$. Suppose also that $|{\cal A}| > k \cdot \frac{A - d'}{d - d' + 1}$. Then procedure Arb-Recolor computes a $d$-arbdefective $(|{\cal A}| \cdot | {\cal B}|)$-coloring $\chi'$.
\end{lem}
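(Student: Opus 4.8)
The plan is to establish two independent facts: first, that Procedure Arb-Recolor is well-defined, i.e.\ that every vertex can find some $\alpha \in {\cal A}$ meeting the selection criterion~(*); and second, that the resulting coloring $\chi'$ is $d$-arbdefective and uses at most $|{\cal A}|\cdot|{\cal B}|$ colors. The color bound is immediate, since each vertex's new color is the pair $(\alpha,\varphi_\chi(\alpha)) \in {\cal A}\times{\cal B}$, so at most $|{\cal A}|\cdot|{\cal B}|$ distinct colors can appear. Throughout I rely on the fixed acyclic out-degree-$A$ orientation $\sigma$ (the one from \cite{BE08} used on the first step), and I work with the invariant that in the input coloring every vertex has at most $d'$ \emph{parents} of its own color under $\sigma$. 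This invariant, rather than the bare arboricity bound, is what the argument needs: an arboricity bound on a color class does not by itself limit the number of same-colored parents under a globally fixed $\sigma$, so the input must be supplied as a coloring that is $d'$-defective relative to $\sigma$'s parent relation. This holds for the trivial initial coloring ($d'=0$), and I will verify it is reproduced by the recoloring.

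Granting the invariant, the $d$-arbdefective conclusion follows cleanly. Let $v$ have color $\chi$ and selected value $\alpha$, and let $u$ be a parent of $v$ with color $y_i$ and selected value $\alpha_u$. Then $v$ and $u$ receive the same new color only if $\alpha_u=\alpha$ \emph{and} $\varphi_\chi(\alpha)=\varphi_{y_i}(\alpha)$. Hence the parents of $v$ sharing its new color form a subset of $\{\,i : \varphi_\chi(\alpha)=\varphi_{y_i}(\alpha)\,\}$, whose size is at most $d$ by~(*). So in $\chi'$ every vertex has at most $d$ same-colored parents; restricting the acyclic orientation $\sigma$ to any color class yields an acyclic orientation of out-degree at most $d$, and Lemma~\ref{acyclic} then gives that this class has arboricity at most $d$. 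Thus $\chi'$ is $d$-arbdefective, and this simultaneously re-establishes the parent-defect invariant (now with bound $d$) for the next iteration.

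The heart of the proof, and the step I expect to be the main obstacle, is the existence of a valid $\alpha$. I would argue by an averaging count over ${\cal A}$: assume for contradiction that every $\alpha\in{\cal A}$ fails~(*), i.e.\ has at least $d+1$ colliding parents. Counting colliding pairs $(\alpha,i)$ with $\varphi_\chi(\alpha)=\varphi_{y_i}(\alpha)$ in two ways, summing over $\alpha$ gives a total of at least $|{\cal A}|\,(d+1)$. Summing over the $\delta\le A$ parents instead, split them into the $s$ parents that share $v$'s color $\chi$ (which collide at all $|{\cal A}|$ values of $\alpha$) and the $\delta-s$ parents of a different color (each colliding at most $k$ times, by the hypothesis on the family $\{\varphi_x\}$); this gives a total of at most $s\,|{\cal A}| + (\delta-s)\,k$. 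Combining the two yields $|{\cal A}|(d+1-s) \le (\delta-s)\,k$, and invoking the invariant $s\le d'$ together with $\delta\le A$ I would rearrange to $|{\cal A}| \le \frac{(\delta-s)k}{d+1-s} \le k\cdot\frac{A-d'}{d-d'+1}$. The last inequality is a short monotonicity/substitution check that uses $A\ge d+1$ (and is trivial when $\delta\le d$, where any $\alpha$ works outright). This contradicts the standing hypothesis $|{\cal A}| > k\cdot\frac{A-d'}{d-d'+1}$, so a valid $\alpha$ must exist, completing the proof.
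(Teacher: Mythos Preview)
Your proof is correct and follows essentially the same approach as the paper: show that only parents $u$ with $\varphi_\chi(\alpha)=\varphi_{y_i}(\alpha)$ can share $v$'s new color, and then establish existence of a valid $\alpha$ by a double-counting argument over pairs $(\alpha,i)$, splitting parents into those with $y_i=\chi$ and those with $y_i\neq\chi$. Your explicit articulation of the parent-defect invariant (at most $d'$ same-colored parents under the fixed $\sigma$), and your remark that bare arboricity of a color class would not suffice, is a genuine clarification: the paper's proof silently uses $\ell\le d'$ for the number of same-colored parents, which is exactly your invariant rather than the weaker arbdefect statement it nominally assumes.
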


\def\APPz{
The lemma is proved in two steps. First, we show that if for every vertex $v$ of original color $\chi(v) = \chi$ there exists a value $\alpha$ that satisfies the property (*) of step 1 of Procedure Arb-Recolor, then the arbdefect of the resulting coloring is at most $d$. (Since $\alpha \in {\cal A}$ and $\varphi_x \in {\cal B}$, it is obvious that the resulting coloring employs at most $|{\cal A}| \cdot | {\cal B}|$ colors.) Second, we show that these values $\alpha$ do indeed exist, for all vertices.

Consider a vertex $v$ with $\chi(v) = x$, and with $\delta \leq A$ parents under the orientation $\sigma$  with colors $y_1,y_2,...,y_{\delta} \in [M]$. Suppose that $v$ selects a new color $\chi'(v) = (\alpha, \varphi_x(\alpha))$. Denote $\beta = \varphi_x(\alpha)$. Let $u$ be a parent of $v$ with a color $y \in [M]$ (i.e., $\chi(u) = y$) for which $\beta' = \varphi_y(\alpha) \neq \varphi_x(\alpha) = \beta$. 

Denote by $(\alpha'', \beta'')$ the color selected by $u$ (i.e., $\chi'(u) = (\alpha'', \beta''))$. If $\alpha'' \neq \alpha$ then $\chi'(v) \neq \chi'(u)$. If $\alpha'' = \alpha$ then $\beta'' = \beta' = \varphi_y(\alpha) \neq \beta$, and again $\chi'(v) \neq \chi'(u)$. Hence for $\chi'(u)$ to be equal to $\chi'(v)$ the vertices $u$ and $v$ must select the same value of $\alpha$, and moreover, it must hold that $\varphi_x(\alpha) = \varphi_y(\alpha)$. On the other hand, by (*) of step 2 of Procedure Arb-Recolor, there are at most $d$ indices $i \in [\delta]$ such that $\varphi_x(\alpha) = \varphi_{y_i}(\alpha)$. Hence at most $d$ parents of $v$ may get the same resulting color as $v$.

Now we show that the values $\alpha$ that are required by the algorithm exist. (Observe that their existence is sufficient, since the computations that are needed to find them are local.) Assume without loss of generality that $v$ has exactly $A$ parents, i.e., $\delta = A$. Let $\ell$, $\ell \leq d'$, be the number of parents $u$ of $v$ with the same original color (i.e., $\chi(u) = \chi(v))$. It is sufficient to show that at most $(d - \ell)$ other parents $u$ of $v$ end up to have the same $\chi'$-color as $v$. (As this will imply that the overall number of parents $u$ with $\chi'(u) = \chi'(v)$ is at most $d$.)

Let $S = \{ i \in [\delta]: y_i \neq x \}$ be the set of indices of parents $u$ of $v$ with $\chi(u) \neq \chi(v)$. We argue that there exists an $\alpha \in {\cal A}$ such that $| \{i \in S: \varphi_x(\alpha) = \varphi_{y_i}(\alpha) \} | \leq d - \ell$. For contradiction, suppose that for every $\alpha \in {\cal A}$ there are at least $d- \ell + 1$ indices $i \in S$ for which $\varphi_x(\alpha) = \varphi_{y_i}(\alpha)$. However, for every $i \in S$, $\varphi_x(\alpha) = \varphi_{y_i}(\alpha)$ holds for at most $k$ distinct values of $\alpha \in {\cal A}$. Since there are at most $(A - \ell)$ parents $u$ of $v$ that satisfy $\chi(u) \neq \chi(v) = x$, it follows that $$(A - \ell) \cdot k \geq \Sigma_{i \in S} | \{ \alpha \in {\cal A}: \varphi_x(\alpha) = \varphi_{y_i}(\alpha) \} |.$$
The right-hand side is equal to $\Sigma_{\alpha \in {\cal A}} \left| \{ i \in S: \varphi_x(\alpha) = \varphi_{y_i}(\alpha) \} \right|$. By the contradiction assumption, for every value $\alpha \in {\cal A}$, it holds that $| \{ i \in S: \varphi_x(\alpha) = \varphi_{y_i}(\alpha) \} | \geq d - \ell + 1$.

\noindent Hence $\Sigma_{\alpha \in {\cal A}} \left| \{ i \in S: \varphi_x(\alpha) = \varphi_{y_i}(\alpha) \} \right| \geq (d - \ell + 1) \cdot |{\cal A}|$.

\noindent Hence $(A - \ell) \cdot k \geq (d - \ell + 1) \cdot | {\cal A}|$, i.e., $|{\cal A}| \leq \frac{(A - \ell) \cdot k}{d + 1 - \ell}$.

\noindent For $ A \geq d + 1$, it holds that $\frac{A - \ell}{(d + 1) - d'} \leq \frac{A - d'}{(d + 1) - d'}$, because $\ell \leq d'$. Hence $| {\cal A} | \leq \frac{(A - d')\cdot k }{(d + 1) - d'}$.

\noindent This, however, contradicts the assumption of the lemma about the cardinality of ${\cal A}$. (For $A = d$ a trivial coloring that assigns the same color to all vertices has arbdefect $d$.)

}
By Lemma 4.3 \cite{K09}, the collection of functions $\{ \varphi_x : x \in [M] \}$ with the property required by the statement of Lemma \ref{arbkuhn} exists if $|{\cal B}| \geq \frac{{\cal A}}{2 \cdot \ln M}$ and $k= \left \lfloor 2 e \cdot \ln M \right \rfloor$. For Lemma \ref{arbkuhn} to hold, $|{\cal A}|$ should be greater than $k \cdot \frac{A - d'}{d - d' + 1}$. Hence the number of colors used by $\chi'$ is $|{\cal A}| \cdot |{\cal B}| = k^2 \frac {(A - d')^2}{(d - d' + 1) 2 \ln M} = O(\log M) \cdot \frac{(A - d')^2}{d - d' + 1}$. 

By using $O(\log^* M)$ iterations of Procedure Arb-Recolor with intermediate values of defect parameter that are specified in the proof of Theorem 4.9 of \cite{K09} we obtain an $O(A^2 / d^2)$-coloring with arbdefect at most $d$. (The proof for this statement is identical to the proof of Theorem 4.9 of \cite{K09}.) Henceforth, we refer to this algorithm that invokes Procedure Arb-Recolor $O(\log^* M)$ times by {\em Algorithm Arb-Kuhn}. Since each invocation of procedure Arb-Recolor requires $O(1)$ time, the overall running time of Algorithm Arb-Kuhn is $O(\log^* M) = O(\log^* n)$. (For $M = n$ we start with a trivial legal $n$-coloring that uses each vertex Id as its color.) As $d = A/t$, we obtain an $A/t$-arbdefective  $O(t^2)$-coloring within this running time.

Next, we argue that using Algorithm Arb-Kuhn in conjunction with our algorithm enables one to trade between the running time and number of colors. Specifically, set $d = f(a)$ to be some growing function of the arboricity $a$, i.e., $f(a) = \omega(1)$. Invoke Algorithm Arb-Kuhn. We obtain a decomposition of the original graph $G$ into $O(A^2 / f(a)^2) = O(a^2/ f(a)^2)$ subgraphs of arboricity at most $\alpha = f(a)$ each. Invoke on each of these subgraphs in parallel our algorithm that for $n$-vertex graphs with arboricity $\alpha$ computes an $O(\alpha^{1 + \eta})$-coloring in $O(\log \alpha \log n)$ time, for an arbitrarily small constant $\eta > 0$. Use distinct palettes  of size  $O(\alpha^{1 + \eta})$ for each of the $O(a^2 / \alpha^2)$ subgraphs to get a unified $O(a^2/ \alpha^{1-\eta})$-coloring of the entire graph $G$. The running time of this algorithm is $O(\log^* n + \log \alpha \log n) = O(\log f(a) \cdot \log n)$.
Finally, we set $g(a) = f(a)^{1- \eta}$ and obtain the following Theorem.

\begin{thm}
For an arbitrarily small constant $\eta > 0$, and any function $\omega(1) = g(a) = O(a^{1 - \eta})$, our algorithm computes an $O(a^2 / g(a))$-coloring, within time $O(\log g(a) \cdot \log n)$.
\end{thm}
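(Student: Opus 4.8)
The plan is to chain the arbdefective analogue of Kuhn's algorithm (Algorithm Arb-Kuhn) with the polylogarithmic legal coloring of Corollary~\ref{polylog}, choosing the intermediate arboricity target so that the product of the two palette sizes collapses to $O(a^2/g(a))$. First I would translate the target $g(a)$ into a defect parameter by setting $f(a) := g(a)^{1/(1-\eta)}$, equivalently $g(a) = f(a)^{1-\eta}$. Because $g(a) = \omega(1)$ we get $f(a) = \omega(1)$, and because $g(a) = O(a^{1-\eta})$ we get $f(a) = O(a)$; hence for all sufficiently large $a$ the value $f(a)$ is a legitimate defect parameter lying in the admissible range, so that the bound $O(A^2/f(a)^2)$ on the number of color classes is meaningful. (In the extreme regime where $f(a)$ would exceed $A = (2+\epsilon)\cdot a$, a single direct application of Corollary~\ref{polylog} to $G$ already yields $O(a^{1+\eta}) = O(a^2/g(a))$ colors, so that case is subsumed.)

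Next I would perform the decomposition. Using the out-degree-$A$ orientation of \cite{BE08}, computed in $O(\log n)$ time, and invoking Algorithm Arb-Kuhn with defect parameter $d := f(a)$, one obtains in $O(\log^* n)$ additional time an $f(a)$-arbdefective $O(A^2/f(a)^2) = O(a^2/f(a)^2)$-coloring. Read as a vertex decomposition, this splits $G$ into $O(a^2/f(a)^2)$ vertex-disjoint subgraphs $G_1,G_2,\dots$, each of arboricity at most $\alpha := f(a)$.

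Then I would color each piece legally and merge. Assign to every subgraph $G_i$ a \emph{disjoint} palette of $O(\alpha^{1+\eta})$ colors and, in parallel over all $i$, apply Corollary~\ref{polylog} to produce a legal $O(\alpha^{1+\eta})$-coloring of $G_i$ in $O(\log \alpha \log n)$ time. Legality of the merged coloring is immediate by the standard palette argument: an edge internal to some $G_i$ is properly colored by the legal coloring of $G_i$, and an edge crossing between two distinct subgraphs is properly colored because its endpoints draw from disjoint palettes. The color count is $O(a^2/\alpha^2)\cdot O(\alpha^{1+\eta}) = O(a^2/\alpha^{1-\eta}) = O(a^2/f(a)^{1-\eta}) = O(a^2/g(a))$, and the running time is $O(\log n) + O(\log^* n) + O(\log \alpha \log n) = O(\log f(a)\log n) = O(\log g(a)\log n)$, using $\log f(a) = \Theta(\log g(a))$.

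Once the two black boxes are in place, the argument above is essentially bookkeeping, so the genuinely delicate step lies upstream rather than inside this theorem: establishing that Algorithm Arb-Kuhn actually produces a $d$-arbdefective $O(A^2/d^2)$-coloring, which rests on the correctness of Procedure Arb-Recolor (Lemma~\ref{arbkuhn}) together with the $O(\log^* M)$-iteration defect-reduction schedule adapted from \cite{K09}. Within the present proof the only care required is checking that the chosen $f(a)$ stays in the admissible range and that the color product and the time sum substitute cleanly; both follow routinely from the choice $f(a) = g(a)^{1/(1-\eta)}$.
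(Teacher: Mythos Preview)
Your proposal is correct and matches the paper's own argument essentially step for step: decompose via Algorithm Arb-Kuhn with defect $d=f(a)$ into $O(a^2/f(a)^2)$ subgraphs of arboricity at most $f(a)$, apply Corollary~\ref{polylog} in parallel with disjoint palettes, and then substitute $g(a)=f(a)^{1-\eta}$. Your handling of the parameter range and the explicit inclusion of the $O(\log n)$ orientation step are slightly more detailed than the paper's exposition, but the approach is the same.
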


In particular, by setting $g(a) = 2^{\log^{\zeta} a}$ for some $\zeta > 0$, one can have here an $(a^2/ 2^{\Omega(\log^{\zeta} a)})$-coloring within $O(\log^{\zeta} a \log n)$ time. Also, with $g(a) = \log^c a$, for an arbitrarily large constant $c > 0$, one gets an $O(a^2/ \log^c a)$-coloring in $O(\log \log a \log n)$ time. 

Finally, we show that this technique can be used to obtain a tradeoff between the running time and the number of colors. This new tradeoff improves the previous tradeoff (due to \cite{BE08}) for {\em all}  values of the parameters. Specifically, we have shown that $O(a/t)$-arbdefective $O(t^2)$-coloring can be computed in $O(\log n)$ time. In other words, a graph $G$ of arboricity $a$ can be decomposed into $O(t^2)$ subgraphs of arboricity $\alpha = O(a/t)$ each, in $O(\log n)$ time. By Corollary \ref{sumlegal}, by invoking Procedure Legal-Coloring in parallel on all these subgraphs we obtain an $O(\alpha)$-coloring of each of them. The running time of this step is $O((\frac{a}{t})^{\mu} \cdot \log n)$, for an arbitrarily small constant $\mu > 0$. Using disjoint palettes for each of the subgraphs we merge these colorings into a unified  $O(\alpha \cdot t^2) = O(a \cdot t)$-coloring of the original graph $G$. The last step (the merging) requires no communication. Consequently the total running time of the algorithm is $O((\frac{a}{t})^{\mu} \cdot \log n)$.

\begin{thm}
For any parameter $t$, $1 \leq t \leq a$, and a constant $\mu > 0$, an $O(a/t)$-coloring of a graph of arboricity $a$ can be computed in $O((\frac{a}{t})^{\mu} \cdot \log n)$ time.
\end{thm}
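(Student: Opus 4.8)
The plan is to bolt the fast arbdefective decomposition developed in this section onto the legal-coloring machinery of Section~4. First I would invoke \emph{Algorithm Arb-Kuhn} on $G$ with defect parameter $d = A/t$, where $A = \lfloor (2+\epsilon)a \rfloor$ is the out-degree of the orientation $\sigma$ produced by the first step (the algorithm of \cite{BE08}). As established above, this yields, in $O(\log n)$ time, an $O(a/t)$-arbdefective $O(t^2)$-coloring of $G$; equivalently, it decomposes the vertex set of $G$ into $h = O(t^2)$ vertex-disjoint subgraphs $G_1,G_2,\dots,G_h$, each of arboricity $\alpha = O(a/t)$. The only expensive ingredient is the construction of $\sigma$, which costs $O(\log n)$, since the $O(\log^* n)$ iterations of Procedure Arb-Recolor that reduce the defect down to $d$ are negligible by comparison.

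Next I would color the $h$ subgraphs legally, in parallel. Each $G_i$ has arboricity at most $\alpha = O(a/t)$, so by Theorem~\ref{sumlegal}, running Procedure Legal-Coloring on $G_i$ with the parameter $p = \lceil \alpha^{\mu/2} \rceil$ produces a legal $O(\alpha) = O(a/t)$-coloring of $G_i$ in time $O(\alpha^{\mu}\log n) = O((a/t)^{\mu}\log n)$. Because the $G_i$ are disjoint and each invocation communicates only within its own subgraph, all $h$ invocations run simultaneously, so this entire step still costs only $O((a/t)^{\mu}\log n)$.

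Finally I would assign each $G_i$ a private palette of $O(a/t)$ colors (shifting the colors of the $i$-th subgraph by $(i-1)\cdot O(a/t)$, exactly as in Procedure Legal-Coloring and Lemma~\ref{first-time}) and output the union of the $h$ colorings. Legality then follows as in Lemma~\ref{first-time}: an edge whose endpoints lie in a common $G_i$ is already separated by the legal coloring of $G_i$, while an edge crossing from $G_i$ to $G_j$ with $i\neq j$ has its endpoints drawn from disjoint palettes. The total number of colors is $h\cdot O(a/t) = O(t^2)\cdot O(a/t) = O(a\cdot t)$, and the running time is $O(\log n) + O((a/t)^{\mu}\log n) = O((a/t)^{\mu}\log n)$, which is the asserted bound (and for $t=1$ recovers the $O(a)$-coloring of Theorem~\ref{sumlegal}).

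I expect the main obstacle to be not the merge, which is routine, but securing the two quantitative inputs it relies on. The first is that \emph{Algorithm Arb-Kuhn} genuinely outputs only $O(t^2)$ color classes each of arboricity $O(a/t)$ in $O(\log n)$ time; this rests on Lemma~\ref{arbkuhn} together with the $O(\log^* M)$-iteration defect-reduction argument imported from \cite{K09}, applied with the orientation of out-degree $A=(2+\epsilon)a$ rather than a bounded-degree orientation. The second is that Theorem~\ref{sumlegal} may be applied verbatim to each subgraph with the global arboricity $a$ replaced by the local arboricity $\alpha=O(a/t)$. One should also verify that the legal-coloring step dominates the decomposition, i.e.\ that $(a/t)^{\mu}\log n = \Omega(\log n)$; this holds throughout the range $1\le t\le a$ since $a/t\ge 1$.
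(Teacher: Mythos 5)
Your proposal is correct and follows the paper's own proof essentially verbatim: Algorithm Arb-Kuhn decomposes $G$ in $O(\log n)$ time into $O(t^2)$ subgraphs of arboricity $O(a/t)$, Procedure Legal-Coloring (Theorem \ref{sumlegal}) is run on all of them in parallel in $O((\frac{a}{t})^{\mu}\log n)$ time, and disjoint palettes merge the results into a legal coloring with $O(t^2)\cdot O(a/t) = O(a\cdot t)$ colors. Note that the ``$O(a/t)$-coloring'' in the theorem statement is evidently a typo for $O(a\cdot t)$-coloring --- the paper's own proof text and the tradeoff claimed in Section 1.2 both give $O(a\cdot t)$ --- and your argument establishes exactly this intended bound.
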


\clearpage
\pagenumbering{roman}
\appendix
\centerline{\LARGE\bf Appendix}
\section{The Length of a Complete Acyclic Orientation}
In this section we show that a graph with a complete acyclic orientation $\sigma$ with length $\ell$ can be legally colored using $\ell + 1$ colors. Consequently, the length of any complete acyclic orientation of a graph $G$ is at least $\chi(G) - 1$. 
To prove this assertion we need the following lemma.
\begin{lem}
For two vertices $v, v'$ with $len_{\sigma}(v) = len_{\sigma}(v')$, there is no edge $(v,v')$ in the graph.
\end{lem}
\begin{proof}
Suppose for contradiction that $e = (v,v') \in E$. Suppose without loss of generality that $(v,v')$ is oriented from $v$ to $v'$. Let $P$ (respectively, $P'$) be the longest path emanating from $v$ (resp., v') with all edges oriented according to $\sigma$. By definition of $len()$, $len(v) = |P| = |P'| = len(v')$. However, the path $\hat P = e \circ P'$ obtained by concatenating the edge $e$ with the path $P'$ is also oriented consistently with the orientation $\sigma$. However, $|\hat P | = |P'| + 1 > |P|$, contradicting the maximality of $P$.
\end{proof}

Consider the following coloring procedure that accepts as input a complete acyclic orientation of length $\ell$. The procedure runs for $\ell + 1$ rounds. In round $i$, for $i = 1,2,...,\ell + 1$, all vertices $v$ for which all the parents of $v$ have already selected a color in round $i-1$ or before, select the color $\varphi(v) = i$, and send a message to all their neighbors. This completes the description of the procedure.

\begin{lem}
$\varphi$ is a legal $(\ell + 1)$-coloring of $G$.
\end{lem}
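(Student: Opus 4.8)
The plan is to pin down exactly the round in which each vertex receives its color. I would prove, by induction on $len_{\sigma}(v)$, the single claim that every vertex $v$ is eventually colored and that $\varphi(v) = len_{\sigma}(v) + 1$. Since the input orientation has length $\ell$, i.e. $len_{\sigma}(v) \le \ell$ for every $v$, this claim at once shows that $\varphi$ assigns each vertex a value in $\{1,2,\dots,\ell+1\}$; in particular the procedure halts within $\ell+1$ rounds with every vertex colored, and it uses at most $\ell+1$ colors.

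For the base case, a vertex $v$ with $len_{\sigma}(v) = 0$ has no outgoing edge, hence no parent, so the selection rule lets it pick color $1$ in the first round, giving $\varphi(v) = 1 = len_{\sigma}(v)+1$. For the inductive step, let $len_{\sigma}(v) = i \ge 1$. Every parent $u$ of $v$ satisfies $len_{\sigma}(u) \le i-1$, since prefixing any directed path out of $u$ by the edge $v \to u$ produces a directed path out of $v$; by the induction hypothesis $\varphi(u) = len_{\sigma}(u)+1 \le i$, so all parents of $v$ are colored by round $i$. Moreover, taking a witnessing directed path $v = v_0 \to v_1 \to \cdots \to v_i$ of length $i$, the vertex $v_1$ is a parent of $v$ with $len_{\sigma}(v_1) = i-1$, so $\varphi(v_1) = i$, i.e. $v_1$ is colored only in round $i$. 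Consequently, in no round $j \le i$ have all parents of $v$ \emph{already} been colored before round $j$, so $v$ is not colored before round $i+1$; and in round $i+1$ all its parents have indeed been colored in round $i$ or earlier, so $v$ selects color $i+1$. This establishes $\varphi(v) = i+1 = len_{\sigma}(v)+1$ and completes the induction.

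Legality then follows directly. Given any edge, it is oriented by $\sigma$ (the orientation is complete), say $v \to v'$, so that $v'$ is a parent of $v$; the selection rule forces every parent of $v$ to be colored strictly before $v$, whence $\varphi(v') < \varphi(v)$ and in particular $\varphi(v') \ne \varphi(v)$. (Alternatively, one may invoke the preceding lemma: an edge $(v,v')$ forces $len_{\sigma}(v) \ne len_{\sigma}(v')$, and the identity $\varphi = len_{\sigma}+1$ gives $\varphi(v) \ne \varphi(v')$.) The one delicate point, and the part I expect to require the most care, is the treatment of simultaneity in the round-based rule: since all vertices eligible in a given round select their color at the same time, a parent colored in round $i$ does not count as already colored for the purpose of round $i$ itself. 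Making the lower bound $\varphi(v) \ge len_{\sigma}(v)+1$ precise hinges on exactly this point, which is why exhibiting a parent $v_1$ with $\varphi(v_1) = i$ (rather than merely $\varphi(v_1) \le i$) is the crux of the induction step.
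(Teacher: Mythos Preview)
Your proof is correct and follows essentially the same approach as the paper: both establish that $\varphi(v) = len_{\sigma}(v) + 1$ and then use the parent--child relation on an oriented edge to conclude $\varphi(u) > \varphi(v)$. The only difference is that the paper asserts the identity $\varphi(v) = len_{\sigma}(v) + 1$ as an observation, whereas you supply the full induction (including the careful lower bound via a parent with $len_{\sigma} = i-1$), making your argument more detailed but not different in substance.
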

\begin{proof}
First, observe that each vertex $v$ with $len(v) = i$ selects a color in round $i+1$. Since for all $v \in V$ it holds that $len(v) \leq \ell$, it follows that each vertex of $G$ select a color. For each vertex $v$ it holds that $1 \leq \varphi(v) \leq \ell + 1$. It is left to show that for all edges $(u,v) \in E$ the endpoints $u$ and $v$ select distinct colors. Suppose without loss of generality that $v$ is the parent of $u$ (i.e., the edge is oriented towards $v$). Let $i$ and $j$ be the rounds in which $u$ and $v$ have selected their colors, respectively. Since $v$ is the parent of $u$, it has selected its color before $u$ did. Consequently, $\varphi(u) = i > j = \varphi(v)$.
\end{proof}

\section{Proof of Lemma 5.1:}
\APPz
\hfill \ensuremath{\Box}





\end{document}